\newcommand{\comment}[1]{}
\newtheorem{fact}[theorem]{Fact}
\newtheorem{remark}[theorem]{Remark}
\def\rk{{\rm rk\;}}
\def\F{{\mathbb F}}
\def\Aalg{{\cal A}}
\def\Balg{{\cal B}}
\def\Hom{{\rm Hom}}
\def\Lin{{\rm Lin}}
\def\Elin{{\rm Lin}}
\def\Env{{\rm Env}}
\def\Ann{{\rm Ann}}
\def\rem#1{}
\def\sset{{\cal S}}
\title{Deterministic 
Polynomial Time Algorithms for Matrix Completion Problems}
\author{
G\'abor Ivanyos
\thanks{Computer and Automation Research Institute
of the Hungarian Academy of Sciences (MTA SZTAKI),
L\'agym\'anyosi u. 11,
1111 Budapest. 
E-mail: {\tt Gabor.Ivanyos@sztaki.hu}}
\and
Marek Karpinski
\thanks{Department of Computer Science and Hausdorff Center for Mathematics, 
University of Bonn, 53117 Bonn.
E-mail: {\tt marek@cs.uni-bonn.de}}
\and 
Nitin Saxena
\thanks{Hausdorff Center for Mathematics,
Endenicher Allee 62, University of Bonn, 
53115 Bonn.
E-mail: {\tt ns@hcm.uni-bonn.de}}
}
\begin{document}
\maketitle

\begin{abstract}
We present new deterministic algorithms for several cases of the {\em maximum rank matrix completion}
problem (for short {\em matrix completion}), i.e. the problem of assigning values to the variables in
a given symbolic matrix as to maximize the resulting matrix {\em rank}. Matrix completion belongs to
the fundamental problems in computational complexity with numerous important algorithmic applications,
among others, in computing dynamic transitive closures or multicast network codings \cite{hkm05,hky06}.
 We design efficient deterministic algorithms for common generalizations of the results
of Lov\'{a}sz and Geelen on this problem by allowing {\em linear polynomials} in the entries of the input
matrix such that the submatrices corresponding to each variable have rank one.

Our methods are algebraic and quite different from those of Lov\'{a}sz and Geelen.
We look at the problem of matrix completion in the more general setting
of linear spaces of linear transformations, and finding a max-rank element there using a 
greedy method. 
Matrix algebras and Modules play a crucial role in the algorithm. 
We show (hardness) results for special instances of matrix
completion naturally related to matrix algebras, namely: in contrast to
computing isomorphism of modules (for which there is a known
deterministic polynomial time algorithm), finding 
a surjective or an injective homomorphism between 
two given modules is as hard as the general 
matrix completion problem. 
The same hardness holds for finding a maximum dimension
{\em cyclic} submodule (i.e. generated by a single element).
For the ``dual" task, i.e. finding minimal
number of generators of a given module, we present a deterministic 
polynomial time algorithm. The proof methods developed in this paper apply to fairly 
general modules and could be also of independent interest.
\end{abstract}


\begin{keywords} 
matrix completion, identity testing, modules, generators, morphisms.
\end{keywords}

\begin{AMS}
68Q17, 68W30, 16D99.
\end{AMS}

\pagestyle{myheadings}
\thispagestyle{plain}
\markboth{Ivanyos, Karpinski \& Saxena}{Matrix Completion}

\section{Introduction}

A {\em linear matrix} is a matrix having linear polynomials as its entries, say the linear polynomials
are over a field $\F$ and in $\F[x_1,\ldots,x_n]$. The problem of 
{\em maximum rank matrix completion}, or just {\em matrix completion} for
short, is
the problem of assigning values from the field $\F$ to the variables $x_1,\ldots,x_n$ such that the rank 
of a given linear 
matrix is maximized (over all possible assignments). The notion of linear matrices appears in 
several places including both theory and applications, see \cite{hkm05,hky06} for several references.
The problem of matrix completion is a well studied problem, dating back to the work of
Edmonds \cite{e67} and Lov\'{a}sz \cite{l79}. A similar problem
(basically an equivalent one) is
{\em nonsingular matrix completion}, where we have a square linear
matrix and we are interested in an assignment resulting in a nonsingular
matrix. If the ground field is sufficiently large then
the maximum rank achieved by completion coincides with
the rank of the linear matrix considered as a matrix over
the function field $\F(x_1,\ldots,x_n)$, and hence,
by standard linear algebra, finding 
a maximum rank completion (equivalently determining the maximum rank)
is in deterministic polynomial time reducible to instances
of finding (equivalently deciding the existence of) nonsingular completion
of certain minors. Lov\'{a}sz gave an efficient randomized
algorithm to find a matrix completion using the Schwartz-Zippel lemma \cite{s80,z79}, 
deducing that a random assignment of the variables will maximize the rank if the field is large 
enough (see also \cite{im83}). This is a method also useful in the fundamental problem of 
polynomial identity testing 
(PIT). Indeed matrix completion is equivalent to a special case of PIT: any arithmetic formula can 
be written as the determinant of a linear matrix \cite{v79}, hence the formula would be nonzero
iff the corresponding matrix could attain full rank (assuming a large enough field). Over large
fields, this makes matrix completion an important problem in ZPP, as its derandomization
would imply circuit lower bounds (see Kabanets \& Impagliazzo \cite{ki03}). 

Over small fields, matrix completion soon becomes a hard problem. This version has some
important practical applications, for example in constructing multicast network codes \cite{hkm05}, 
and hence there are several results in the literature specifying the exact parameters for 
which the problem becomes NP-hard. The hardness of matrix completion and various related 
problems were first studied by Buss et al. \cite{bfs99} and more recently by Harvey et al.
\cite{hky06}. In the former paper nonsingular matrix completion
is proved to be NP-hard over fields of constant size, while 
the latter showed that matrix completion over the field $\F_2$ is NP-hard 
even if we restrict to a matrix where each variable occurs at most twice in its 
entries. This naturally raises the question: can we solve matrix completion by restricting
the way the variables appear in the input matrix?

Few such cases are already known and they all look at {\em mixed matrices}, i.e. linear 
matrices where each entry is either a variable or a constant. Harvey et al. \cite{hkm05},
building on the works of Geelen \cite{g99} and Murota \cite{m00}, gave an efficient
deterministic
algorithm for matrix completion over {\em any} field if the mixed matrix has each variable
appearing at most once. While Geelen at al. \cite{gim03,gi05} gave an efficient
deterministic
algorithm when the mixed matrix is {\em skew-symmetric} and has each variable appearing 
at most twice.

\smallskip
{\bf Completion by rank one matrices:}
In this paper we are interested in cases that are more general than the first case 
\cite{hkm05}. Consider a linear matrix $A\in\F[x_1,\ldots,x_n]^{m\times m}$ where the 
submatrix ``induced'' by each variable is of rank one, i.e. $A=B_0+x_1B_1+\cdots+x_nB_n$ 
where $B_1,\ldots,B_n$ are constant matrices of rank one (note that $B_0$ is also a 
constant matrix but of arbitrary rank). The case $B_0=0$ was first considered 
by Lov\'{a}sz in \cite{l89}, where it is shown how Edmonds' matroid intersection 
algorithm can be applied to solve this special case in deterministic 
polynomial time. The first main result in this paper is 
a common generalization of the results of Lov\'asz \cite{l89}
and Geelen \cite{g99}: we show that matrix completion 
problem for an {\em arbitrary} $B_0$ can be solved in deterministic polynomial 
time over any field:

\begin{theorem}\label{thm-main-1} 
Let $\F$ be a field and let $B_0,\ldots,B_n$ be $m\times m$
matrices over $\F$. If $B_1,\ldots,B_n$ are of rank one then matrix completion for 
the matrix $(B_0+x_1B_1+\cdots+x_nB_n)$ can be done deterministically in $poly(m,n)$
field operations.
\end{theorem}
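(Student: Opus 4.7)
The plan is to reduce the matrix completion problem to a matroid intersection instance, generalizing the approach of Lov\'asz \cite{l89} for the case $B_0 = 0$.

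First I would compute a rank decomposition $B_0 = \sum_{l=1}^{r_0} c_l d_l^T$, where $r_0 = \rk B_0$ and the sets $\{c_l\}, \{d_l\}$ are each linearly independent in $\F^m$, and factorize each rank-one $B_i = u_i v_i^T$. Then $A(x)$ becomes a sum of $N := r_0 + n$ rank-one matrices $\alpha_e \beta_e^T$ in which the coefficients $z_e$ are \emph{pinned} to $1$ for $e \in [r_0]$ (the $B_0$ part) and free ($z_{r_0 + i} = x_i$) for $i = 1, \ldots, n$. On the ground set $[N]$, define two linear matroids: $M_1$ declares a subset independent iff the left vectors $\{\alpha_e\}$ indexed by it are linearly independent in $\F^m$, and $M_2$ is the analogue on the right vectors $\{\beta_e\}$. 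Let $\tilde k^*$ be the size of the largest common independent set.

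The upper bound $\rk A(x) \leq \tilde k^*$ for every $x$ is ``Lov\'asz-style'': writing $A(x) = \alpha \cdot \mathrm{diag}(z) \cdot \beta^T$ with $\alpha, \beta$ the $m \times N$ matrices of left and right vectors, a Cauchy--Binet expansion of any $k \times k$ minor of $A(x)$ shows that only common $M_1 \cap M_2$-independent subsets of size $k$ contribute nonzero terms, forcing $k \leq \tilde k^*$.

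The main obstacle is the matching lower bound: producing $x \in \F^n$ with $\rk A(x) = \tilde k^*$, given that the $B_0$-side coefficients are pinned to $1$. Running Edmonds' matroid intersection algorithm on $(M_1, M_2)$ in $\mathrm{poly}(m,n)$ field operations yields an optimal common independent set $S^*$. When $S^* \supseteq [r_0]$, setting $x_i = 1$ for free indices of $S^*$ and $x_i = 0$ elsewhere directly realises rank $\tilde k^*$. In general, $S^*$ need not contain $[r_0]$, and the ``excess'' forced terms $c_l d_l^T$ for $l \in [r_0] \setminus S^*$ can cause cancellations that drop the naively-assigned rank below $\tilde k^*$ (so a simple single-variable greedy can get stuck, as small examples show). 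To circumvent this I would employ an augmenting-path refinement of matroid intersection: starting from $A = B_0$ of rank $r_0$, iteratively increase the rank by $x$-updates that may touch several coordinates simultaneously, at each step either producing a rank-increasing multi-variable update via the matroid structure or certifying that the current $A$ already attains $\tilde k^*$. This is the ``greedy method in the more general setting of linear spaces of linear transformations'' alluded to in the introduction; the module-theoretic machinery is what makes the constructive rank-increasing step work over an arbitrary field. The overall running time remains $\mathrm{poly}(m, n)$ field operations.
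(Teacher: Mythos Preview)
Your proposal contains a genuine gap at exactly the point you yourself flag. The matroid-intersection setup is fine and the upper bound $\rk A(x)\le\tilde k^{*}$ is correct, but the constructive lower bound is not provided. You write that you ``would employ an augmenting-path refinement of matroid intersection'' and that ``the module-theoretic machinery is what makes the constructive rank-increasing step work,'' but neither the refinement nor the machinery is specified. As you note, once the $B_0$-coefficients are pinned to $1$, Edmonds' algorithm on $(M_1,M_2)$ can return a common independent set $S^{*}$ that omits some indices in $[r_0]$, and then the naive $0/1$ assignment can fall short of $\tilde k^{*}$; you give no concrete rule for producing a multi-coordinate $x$-update that provably increases the rank, nor a certificate of optimality when no such update exists. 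Deferring this step to unspecified ``module-theoretic machinery'' is circular: that machinery \emph{is} the content of the theorem.

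For comparison, the paper does \emph{not} decompose $B_0$ into rank-one summands and does not use matroid intersection at all. Given the current $h$ (initially $h=B_0$), it left-multiplies by an invertible $g$ so that $e:=gh$ is idempotent, and then tests whether $\Env(gL)\ker e\subseteq eV$, where $\Env(gL)$ is the enveloping algebra of $gL$. If the inclusion holds, the subspace $W=\Env(gL)\ker e+\ker e$ witnesses $\rk h=\dim U-(\dim W-\dim LW)$, certifying optimality. If it fails, a \emph{shortest} product $gh_{j_s}\cdots gh_{j_1}$ of rank-one generators mapping some $v\in\ker e$ outside $eV$ is found, and one proves directly (by a triangular-basis argument) that $a:=e+gh_{j_1}+\cdots+gh_{j_s}$ has rank strictly greater than $\rk e$; pulling back by $g^{-1}$ yields $h'=h+B_{j_1}+\cdots+B_{j_s}$ with $\rk h'>\rk h$, and one iterates. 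Note that this augmenting step lives entirely inside the affine family $B_0+\sum x_iB_i$ and works over any field; the ``module-theoretic'' apparatus of semisimple modules, radicals, etc.\ in later sections of the paper is used for a different problem (minimising generators) and plays no role here. If you want to rescue the matroid-intersection route, you would need to supply an analogue of this enveloping-algebra augmenting lemma for your pinned setting; as written, that lemma is missing.
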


The proof of this theorem basically involves looking at the linear space 
$L:=\langle B_0,B_1,\ldots,B_n\rangle$ of matrices and showing that a greedy approach can be 
utilized to gradually increase the rank of an element in $L$. Our methods are more algebraic 
and quite different 
from those of Lov\'{a}sz and Geelen. In particular our method is 
robust enough to check whether a given matrix in $L$ has the largest possible rank 
{\em without} needing the rank one generators of $L$, 
they are needed only if we want to increase the rank 
(see Section \ref{sec-augment}). 

Matrix algebras or algebras of linear transformations
(in this paper by an {\em algebra} we mean 
a linear space of matrices or linear transformations 
that is also closed under multiplication)
play a crucial role in the algorithm for Theorem~\ref{thm-main-1}. We consider 
special instances of matrix completion problems where algebras of linear
transformations arise naturally. These are certain {\em module} problems.

\smallskip
{\bf Preliminaries about modules: }
If $U$ and $V$ are vector spaces over the field $\F$ then
we denote the vector space of linear maps from $U$ to $V$
by $\Lin(U,V)$. For $\Lin(U,U)$ we use the notation
$\Elin(U)$.
For simplicity, in this paper we consider modules over finite sets.
(Actually, we work with modules over free associative algebras, however
the main concepts and computational tasks we are concerned with
can be understood without any knowledge from the theory 
of abstract associative algebras.) 
Let $\sset$ be a finite set. A vector space $V$ over 
the field $\F$ equipped with a map $\nu$ from $\sset$ into
$\Elin(V)$ is
called an {\em $\F\{\sset\}$-module} (or an $\sset$-module for short if
$\F$ is clear from the context). We assume that the data for an 
$\sset$-module is input by an $|\sset|$-tuple of $\dim V$ by
$\dim V$ matrices. In cases when the map 
$\nu$ is clear from the context - most typically when
$\sset$ is itself a set of linear transformations -
we omit $\nu$ and denote the result $\nu(B)v$ of the action of 
$B\in \sset$ on $v\in V$ by $Bv$. For a set 
$\sset'\subseteq \Elin(V)$ of linear transformations 
the {\em enveloping algebra} $\Env(\sset')$
is the smallest algebra containing $\sset'$. It is the linear
span of finite products of transformations from $\sset'$ (and maybe 
noncommutative).  

In the context of $\sset$-modules the algebra 
$\Aalg=\Env(\nu(\sset)\cup I)$ is of special interest ($I$ is the 
identity in $\Elin(V)$).
An $\sset$-{\em submodule} of $V$ is a linear subspace
closed under the action of all the transformations in $\nu(\sset)$.
Obviously, the intersection of a family of submodules is
again a submodule. In particular, if $T$ is a subset 
of $V$ then there is a smallest submodule of $V$ containing
$T$: the submodule generated by $T$. It is $\Aalg T$,
the linear span of vectors obtained by application of
transformations from $\Aalg$ to vectors from $T$.
The set $T\subseteq V$ is a system of {\em generators} for the 
$\sset$-module $V$ if $V=\Aalg T$. 

{\em Cyclic} submodules, i.e. those  generated by a single element,
are of particular interest. For $v\in V$ we consider
the map $\mu_v:\Aalg\rightarrow V$ given by
$\mu_v(B)=Bv$. Obviously, $\mu_v$ is a linear
map from $\Aalg$ into $V$ and the set $\{\mu_v|v\in V\}$
is a linear space of linear maps from $\Aalg$ to $V$.
The {\em rank} of $\mu_v$ is the dimension of the submodule
$\Aalg v$ generated by $v$. 

\smallskip
{\bf A ``Universal" Module Problem: }
The matrix completion problem
in this context is finding an element $v$ which generates
a submodule of maximum dimension. It turns out that
this problem, which we call {\em cyclic submodule optimization}, 
is universal in matrix completion: 
there is a deterministic polynomial time reduction
from maximum rank matrix completion to cyclic submodule optimization 
(over an arbitrary base field).
We show this universality in Section~\ref{sec-morph}. 
Universality implies two hardness results. First,
existence of a deterministic polynomial time algorithm for cyclic submodule 
optimization would imply deterministic solvability of
the matrix completion problem over sufficiently large fields.
Also, over small fields, cyclic submodule optimization
is $NP$-hard. Second, we get analogous hardness results
for the existence of injective resp.~surjective homomorphisms
between modules (a $\sset$-module {\em homomorphism} from $V$ to $V'$ is a 
linear map in $\Lin(V,V')$ that commutes with the action of $\sset$):

\begin{theorem}\label{thm-main-3}
There is a deterministic polynomial time reduction from the existence 
of (resp. finding) a nonsingular 
matrix completion to the problem of checking for the existence 
of (resp. finding) a surjective (or injective) homomorphism 
between two modules.
\end{theorem}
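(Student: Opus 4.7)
Given an instance $A(x)=B_0+\sum_{i=1}^n x_iB_i$ of nonsingular matrix completion with $A$ square of size $m\times m$, the plan is to build in deterministic polynomial time a pair of $\sset$-modules $M,N$ of equal dimension over $\F$ together with a canonical identification of $\Hom_\sset(M,N)$ with the linear combinations of $B_0,B_1,\ldots,B_n$. Since $\dim M=\dim N$, a $\phi\in\Hom_\sset(M,N)$ is surjective iff it is injective iff it is invertible; consequently a single construction will simultaneously handle the surjective and the injective variants of the theorem.

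\textbf{Homogenization.} A Hom-space is always a \emph{linear} subspace of $\Lin(M,N)$, whereas the set of admissible completions $\{B_0+\sum c_iB_i : c_i\in\F\}$ is an \emph{affine} subspace of $\Lin(\F^m,\F^m)$. To bridge the gap we take $M=N=\F^m\oplus\F$ and aim to realize
\[
L\;:=\;\left\{\begin{pmatrix} c_0B_0+\sum_{i=1}^n c_iB_i & 0\\[2pt] 0 & c_0\end{pmatrix}:c_0,c_1,\ldots,c_n\in\F\right\}
\]
as $\Hom_\sset(M,N)$. An element of $L$ is invertible precisely when $c_0\neq 0$ and the top-left block is nonsingular; after dividing through by $c_0$, this is equivalent to $A$ being nonsingular at $x_i=c_i/c_0$. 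Hence a surjective (equivalently, injective) homomorphism $M\to N$ exists iff $A$ admits a nonsingular completion.

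\textbf{Realizing $L$ as a Hom-space.} The $\sset$-action on $M$ and $N$ must impose the defining linear constraints of $L$ via intertwining relations $\phi\nu(s)=\nu'(s)\phi$. Three families of generators should suffice: (i) a projection pair onto the $\F$-summand, whose commutation with $\phi$ forces $\phi$ to be block-diagonal; (ii) for each linear functional in the trace-annihilator of $\langle B_0,\ldots,B_n\rangle$ inside $\Lin(\F^m,\F^m)$, an operator pair whose intertwining relation enforces the corresponding scalar condition on the top-left block of $\phi$; (iii) a final pair that couples the lower-right scalar of $\phi$ with the coefficient of $B_0$ in the expansion of its top-left block, for instance by letting the $\F$-summand of $M$ map via $B_0$ into the $\F^m$-summand of $N$.

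\textbf{Main obstacle.} The subtle step is (ii): a single intertwining relation $\phi\nu(s)=\nu'(s)\phi$ produces not one scalar equation but a coupled system of up to $(m+1)^2$ equations at once, so cutting out an arbitrary prescribed subspace like $\langle B_0,\ldots,B_n\rangle$ requires augmenting $M$ and $N$ by carefully chosen auxiliary summands on which the operators act so as to isolate exactly one desired scalar condition per generator. This augmentation must be carried out in polynomial time, it must preserve the dimension equality $\dim M=\dim N$ (so that surjectivity, injectivity and invertibility continue to coincide on the relevant $\phi$), and the projection from $\Hom_\sset(M,N)$ onto $L$ must remain a bijection. Once these properties are verified, both the surjective and the injective formulations of the reduction follow from the same pair $(M,N)$.
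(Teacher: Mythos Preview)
Your proposal has a genuine gap at exactly the point you flag as the ``main obstacle'': you never construct the operators in family~(ii), and encoding an \emph{arbitrary} linear subspace of $M_m(\F)$ as a $\Hom$-space by piling on intertwiner constraints is not straightforward. A single relation $\phi\,\nu(s)=\nu'(s)\,\phi$ imposes $\dim M\cdot\dim N$ coupled scalar equations, so isolating one trace-functional at a time forces you to enlarge $M$ and $N$; you then have to argue that these enlargements do not create spurious homomorphisms, that $\dim M=\dim N$ is preserved, and that the projection back to your $L$ stays bijective. None of this is carried out, and there is no indication the details come for free.

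The paper avoids this difficulty entirely by \emph{not} trying to make $\Hom_\sset(M,N)$ equal to $L$. Instead it takes $\sset=\{r_1,\dots,r_m\}$ indexed by a basis $u_1,\dots,u_m$ of the domain $U$, sets $W=L\oplus V$ with $r_i(h,v)=(0,hu_i)$, and lets $W_0$ be the ``free'' cyclic module on $b_0$ with $r_ib_0=b_i$ and $r_ib_j=0$. Because $W_0$ is cyclic, a homomorphism $\psi:W_0\to W$ is completely determined by $\psi(b_0)=(h,v)$, and \emph{every} $(h,v)\in W$ arises; so $\Hom_\sset(W_0,W)\cong L\oplus V$ with no work. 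The image of $\psi$ is then $\langle (h,v),(0,hu_1),\dots,(0,hu_m)\rangle$, hence $\psi$ is injective iff $h$ is injective (equivalently nonsingular, since the matrix is square). Surjectivity is handled not via $\dim M=\dim N$ but by dualizing: $\psi$ is injective iff $\psi^*:W^*\to W_0^*$ is surjective. The whole construction is explicit, of polynomial size, and requires no cutting-out of subspaces by constraints.

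The contrast is instructive: you attempted to encode the \emph{space} $L$ into the module structure, which is hard; the paper encodes the \emph{domain} $U$ into $\sset$ and lets $L$ sit inside the target module as a direct summand, which is trivial.
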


This result is remarkable
in view of the recent deterministic polynomial time
algorithm of Brookbanks \& Luks \cite{bl08}
for module isomorphism problem (see also Chistov et al. \cite {cik97} over
special base fields).

\smallskip
{\bf A ``Dual" Problem: }
In Section \ref{sec-min-gen} we consider a
problem which is in some sense ``dual" to the cyclic submodule 
optimization. This is finding  
a system of generators of smallest size for a module.
In contrast to hardness of the former problem,
 we have an efficient solution to the latter: 

\begin{theorem}\label{thm-main-2}
Given a module structure
on the $n$-dimensional vector space $V$ over the field $\F$ 
in terms of $m$ $n\times n$ matrices $\sset$, one can find the minimum number of 
generators of $V$ deterministically using $poly(m,n)$ field operations.
\end{theorem}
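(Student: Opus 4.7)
The plan is to reduce, via the Jacobson radical, the problem to the minimum-generator question on a semisimple module over a semisimple algebra, and then to solve it by iteratively peeling off maximum cyclic submodules.

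First I would compute the enveloping algebra $\Aalg := \Env(\nu(\sset)\cup\{I\}) \subseteq \End(V)$ by closing $\nu(\sset)\cup\{I\}$ under products and $\F$-spans; this stabilises within $O(n^2)$ multiplications since $\dim_\F\Aalg \le n^2$. Then compute the Jacobson radical $J := \Rad{\Aalg}$ deterministically using only field operations (via, for example, Friedl--R\'onyai-type algorithms and their extensions to arbitrary fields), and set $\Calg := \Aalg/J$ (semisimple) together with the quotient module $W := V/JV$ over $\Calg$. Because $\Aalg$ is finite-dimensional, $J$ is nilpotent on $V$, so Nakayama's lemma implies that a subset $T\subseteq V$ generates $V$ over $\Aalg$ if and only if its image generates $W$ over $\Calg$; hence the minimum generator count of $V$ over $\Aalg$ equals that of $W$ over $\Calg$.

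By Artin--Wedderburn, $\Calg \cong \prod_{i=1}^{t} M_{n_i}(D_i)$ and $W \cong \bigoplus_{i=1}^{t} S_i^{k_i}$ where $S_i := D_i^{n_i}$ is the simple $M_{n_i}(D_i)$-module. A tuple $(w_1,\ldots,w_d) \in W^d$ generates $W$ if and only if, in each isotypic component $W_i$, the $d$ corresponding $n_i\times k_i$ matrices over $D_i$ stack vertically into a matrix of $D_i$-column rank $k_i$; consequently the minimum generator count of $W$ equals $\max_i \lceil k_i/n_i\rceil$. To compute this quantity without performing the full Wedderburn decomposition (which over a general $\F$ would require polynomial factorisation), I would proceed iteratively: starting from $W^{(0)} := W$, at step $s$ find $m_s \in W^{(s-1)}$ that maximises $\dim_\F \Calg m_s$, then replace $W^{(s-1)}$ by $W^{(s)} := W^{(s-1)}/\Calg m_s$, stopping when $W^{(s)}=0$. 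Since a maximum cyclic submodule of a semisimple $\Calg$-module is exactly $\bigoplus_i S_i^{\min(k_i,n_i)}$, each iteration reduces every still-positive multiplicity $k_i$ by $\min(k_i,n_i)$, so the loop terminates after exactly $\max_i \lceil k_i/n_i\rceil$ iterations, which is the desired answer.

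The critical subroutine is finding an element $m$ of a semisimple $\Calg$-module that maximises $\dim_\F \Calg m$, and this is the main obstacle: for general modules the analogous task is NP-hard (of the kind captured by Theorem~\ref{thm-main-3}), so it is essential that after the radical reduction we are working with semisimple $W$. In the semisimple setting the function $m \mapsto \dim_\F \Calg m$ is governed by the $D_i$-ranks of the isotypic components of $m$, and it can be maximised deterministically without explicitly exhibiting the isotypic decomposition by combining an augmentation procedure with the structure of the centre of $\Calg$ (and, if necessary, passing to a polynomial-size algebraic extension to avoid small-field pathologies when searching for maximum-rank elements). Making this subroutine rigorous over an arbitrary ground field---using only field operations---is where the main technical care is concentrated.
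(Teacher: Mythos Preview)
Your reduction via the Jacobson radical and Nakayama's lemma is exactly what the paper does in its small-field branch, and your formula $\max_i\lceil k_i/n_i\rceil$ matches the paper's Lemma~\ref{lem-max-dim}. The architecture differs from the paper in two respects, though.

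First, the paper does \emph{not} compute the radical uniformly. Over fields with $|\F|>2n$ it runs a direct iterative procedure (the inner/outer loop in Section~\ref{sec-min-gen}) on $V$ itself, maintaining a pair $(U,W)$ with $V=U+W$ and $U$ given by $\ell$ generators, and uses Lemma~\ref{lem-cik-compl} to shrink $W$; the radical enters only implicitly, inside the proof of that lemma. Only when $|\F|\le 2n$, so that $\F$ is necessarily finite, does the paper invoke \cite{fr85,ciw97} to compute the radical and pass to the semisimple quotient. Your uniform radical-first route is fine provided radical computation is genuinely available with field operations over the field at hand; you should be explicit about which algorithms cover which fields rather than gesturing at ``extensions to arbitrary fields.''

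Second, and more importantly, the subroutine you flag as the ``main obstacle'' is exactly what the paper supplies as Lemma~\ref{lem-cycrank}: in a semisimple module, $\Aalg u$ has maximum dimension if and only if $\Ann_\Aalg(u)\,V\subseteq\Aalg u$, and when this fails one finds $b\in\Ann_\Aalg(u)$ and $w\in V$ with $bw\notin\Aalg u$, computes a projection $\pi'$ onto a complementary submodule of $\Aalg u$, and sets $u'=u+\pi'w$. This is pure linear algebra over $\F$ itself---no Wedderburn decomposition, no use of the centre of $\Calg$, and no passage to an algebraic extension. Your sketch (``augmentation $+$ centre $+$ field extension if necessary'') points in a more complicated direction than is needed; the annihilator criterion is the clean device. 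Once it is in place, either your one-at-a-time peeling of maximum cyclic submodules or the paper's simultaneous augmentation of an $\ell$-tuple (Lemma~\ref{lem-multrank}, Proposition~\ref{pro-cik-gen}) completes the argument.
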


Note that the above result includes testing cyclicity of modules efficiently
over any field. 
This problem was considered in \cite{cik97} over special fields
as a tool for constructing isomorphisms between modules.
The algorithm is based on a greedy approach analogous to the method for
Theorem~\ref{thm-main-1}, implicitly using certain submodule dimension
optimization technique for a special class of (so called {\em semisimple})
modules.

\section{Matrix Completion with Rank One Matrices}\label{sec-augment}

Let $V$ be a finite dimensional vector space
over the field $\F$ and $L\leq \Elin(V)$
be a $\F$-linear space of linear transformations.
Recall that $\Env(L)$, the {\em enveloping algebra} of $L$ is
the linear span of products 
$h_1h_2\cdots h_s$ ($s\geq 1,\; h_1,\ldots,h_s\in L$). 
Obviously, $\Env(L)$ is also spanned by products 
of elements from an arbitrary basis of $L$. 
We will use the 
action of the enveloping algebra on the kernel of an idempotent transformation  
to optimize rank in a linear space, to that effect we present the following lemma.
Its proof will also suggest how to greedily increment the rank.

\begin{lemma}\label{lem-augment}
Let $V$ be a finite dimensional vector space
over the field $\F$, let $L\leq \Elin(V)$ and assume that $e\in L$ is an 
idempotent ($e^2=e$) such that $\rk e\geq \rk h$ for every $h\in L$.
If $L$ is spanned by $e$ and certain rank one 
transformations, then $\Env(L)\ker e\subseteq eV$.
\end{lemma}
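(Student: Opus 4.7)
The plan is to decompose $V = U \oplus W$ with $U = eV$ and $W = \ker e$, and to extract two structural facts about the rank-one generators of $L$ from the maximality of $\rk e$.

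First I would analyze a single rank-one generator $h = u\phi^T$ from the prescribed spanning set. Testing $e + h \in L$: if both $u \notin U$ and $\phi|_W \neq 0$ held, then $(e+h)v = ev + \phi(v)u$, and one verifies directly that the image contains all of $U$ (from inputs $v \in U$, after subtracting multiples of $u$) together with $u$ itself (from inputs $v \in W$), giving $\rk(e+h) = \rk e + 1$ and contradicting maximality. So every rank-one generator is ``type A'' ($u \in U$) or ``type B'' ($\phi|_W = 0$), or both. In either case $h W \subseteq U$, yielding the single-factor version $L \ker e \subseteq eV$.

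Next, for any pure type A generator $h_i = u_i\phi_i^T$ (with $u_i \in U$ but $\phi_i|_W \neq 0$) and any pure type B generator $h_j = u_j\phi_j^T$ (with $\phi_j|_W = 0$ but $u_j \notin U$), I would show $\phi_j(u_i) = 0$ by computing the image of $e + h_i + h_j \in L$. Block-decomposing along $U \oplus W$, the image projects onto the line $\F u_j^W$ in $W$, and a direct dimension count shows that the intersection of the image with the kernel of this projection has dimension $\rk e + 1$ precisely when $\phi_j(u_i) \neq 0$. Maximality therefore forces $\phi_j(u_i) = 0$.

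Since $\Env(L)$ is spanned by products $p_1 \cdots p_s$ of generators from $\{e\} \cup \{h_i\}$, it suffices to prove by induction on $s$ that $p_1 \cdots p_s w$ lies in $\{0\}$ together with the linear span of $\{u_i : h_i \text{ is type A}\}$, a subspace of $U$. The base case $s = 1$ is the consequence of the first step. For the inductive step, let $v$ be in this invariant span; applying $e$ fixes $v$; applying a type A generator $h_k$ sends $v$ to $\phi_k(v)u_k \in \F u_k$; applying a pure type B generator $h_k$ sends $v$ to $\phi_k(v)u_k$, and by the cross condition of the second step $\phi_k$ vanishes on every type A $u_i$ and hence on $v$, so this product is $0$. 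The invariant is preserved in all cases, completing the induction.

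The key obstacle is that one cannot naively assert $L$ preserves $U$: a pure type B generator genuinely can send vectors of $U$ into $W$. The strengthened inductive invariant (``lies in the span of type A $u_i$'s'' rather than merely ``lies in $U$'') is precisely what makes the induction go through, with the cross condition engineered to close this invariant under all generators. A pleasant byproduct is that both rank computations are exact rather than generic, so the argument works uniformly over any field without size restrictions.
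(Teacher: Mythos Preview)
Your first two steps are fine, but the induction in Step~3 does not close. You establish the cross condition $\phi_j(u_i)=0$ only for \emph{pure} type~A generators $h_i$ (those with $\phi_i|_W\neq 0$) against pure type~B generators $h_j$. In the inductive step, however, you invoke it for \emph{every} type~A $u_i$, including those $h_i$ that are simultaneously type~A and type~B. That stronger claim is false in general. Take $V=\F^3$, $U=\langle e_1,e_2\rangle$, $W=\F e_3$, $e$ the projection onto $U$, and
\[
h_1=e_1e_2^T \quad(\text{type A and type B}),\qquad h_2=e_3e_1^T\quad(\text{pure type B}).
\]
Every element of $L=\langle e,h_1,h_2\rangle$ has vanishing third column, so $\rk e=2$ is maximal, and your cross condition is vacuous (there is no pure type~A generator). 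Yet $u_1=e_1$ is a type~A vector with $\phi_2(u_1)=1\neq 0$, and $h_2(e_1)=e_3\notin U$: your invariant set $S=\F e_1$ is \emph{not} closed under $h_2$. The lemma still holds here ($LW=0$), but your inductive argument, as stated, is broken.

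The underlying difficulty is that a generator which is both type~A and type~B can feed a pure type~B generator a vector it does not annihilate; to rule this out one needs to analyse combinations $e+h_{i_1}+\cdots+h_{i_s}$ of length $s>2$. This is exactly what the paper does: it assumes a shortest product $h_s\cdots h_1 v$ escaping $eV$, shows the $v_i=h_i\cdots h_1v$ are independent with $h_jv_{i-1}=0$ for $j>i$, and then proves directly that $a=e+h_1+\cdots+h_s$ has rank $\rk e+1$. Your two-term analysis captures only the case $s=1$ (and a special case of $s=2$) of this mechanism; the paper's argument is what is needed to handle arbitrary chains.
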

\begin{proof}
Assume, for contradiction, that $\Env(L)\ker e$ is not contained in
$eV$. Then there exists a vector $v\in\ker e$ such that $L^sv\not\subseteq eV$
for some integer $s$. Let $s\geq 1$ be the smallest 
among such integers. Then there are matrices
$h_1,\ldots,h_s\in L$ with $h_s\cdots h_2h_1v\not\in eV$ 
such that for every $i$, 
the matrix $h_i$ is either $e$ or has rank
one. Assume that
$h_j=e$ for some $j\leq s$. Then $j>1$ as $ev=0$.
Furthermore, the minimality of $s$ implies 
\begin{equation}\label{eqn-in-eV}
h_{j-1}\cdots h_1v\in eV, 
\end{equation}
therefore, as $ew=w$ for every $w\in eV$, we have
$h_s\cdots h_{j+1} h_jh_{j-1}\cdots h_1v= h_s\cdots$ $h_{j+1}eh_{j-1}\cdots h_1v=$ 
$h_s\cdots h_{j+1}h_{j-1}\cdots h_1v$,
contradicting the minimality of $s$. Thus all the matrices
$h_1,\ldots,h_s$ are of rank one.
Set $v_0=v$ and for $1\leq i\leq s$,
$v_i=h_iv_{i-1}$. The minimality of
$s$ implies that for every $1\leq i\leq s$ we have
$v_i\in L^iv\setminus \sum_{j=0}^{i-1}L^jv$.
In particular, the vectors $v_0,\ldots,v_s$
are linearly independent. Since $h_i$ is a rank one transformation on $V$, 
\begin{equation}\label{eqn-rank-1-action}
h_iV=\F v_i, \text{ for all } 1\le i\le s. 
\end{equation}
From this, and from the minimality 
of $s$ we infer $h_jv_{i-1}=0$ for
every $1\leq i<j\leq s$ (otherwise $v_j\in h_j L^{i-1}
v\subseteq L^i v$). We show below that $a:=e+h_1+\ldots+h_s$ is of
a rank higher than $e$, leading to the desired contradiction.

Informally, we build a basis in which the matrix of $a$ is upper triangular. 
First we see that (keep in mind Equations \ref{eqn-in-eV} \& \ref{eqn-rank-1-action}) 
$av_{i-1}=
ev_{i-1}+\sum_{j<i}h_jv_{i-1}+v_i+\sum_{j>i}h_jv_{i-1}=
v_i+ev_{i-1}+\sum_{j<i}h_jv_{i-1}\in v_i+e v_{i-1}+\sum_{j<i}\F v_j 
\subseteq v_i+\langle v_1,\ldots,v_{i-1}\rangle$ ($i=1,\ldots,s$).
(With some abuse of notation, for $i=1$, 
by $\langle v_1,\ldots,v_{i-1}\rangle$ we mean the zero subspace.) 
Hence the vectors $av_0,\ldots,av_{s-1}$ span
the subspace $\langle v_1,\ldots,v_s \rangle$.
Let $W$ be a direct complement to
the subspace $\langle v_1,\ldots,v_{s-1}\rangle$
in $eV$ and $w_1,\ldots,w_t$
be a basis of $W$ ($t=\rk e-s+1$). Then
$aw_i=ew_i+\sum_{j=1}^sh_jw_i=
w_i+\sum_{j=1}^sh_jw_i\in w_i+\langle v_1,\ldots,v_s\rangle$ (by Equation 
\ref{eqn-rank-1-action}).
Therefore the vectors $aw_1,\ldots,aw_t$ are linearly
independent even modulo the subspace $\langle v_1,\ldots,v_s\rangle$.
Together with the fact that $\langle av_0,\ldots,av_{s-1}\rangle=
\langle v_1,\ldots,v_s \rangle$, this implies that
$\langle av_0,\ldots,av_{s-1},aw_1,\ldots,aw_t\rangle
=\langle v_1,\ldots,v_s,w_1,\ldots,w_t\rangle=\langle eV,v_s\rangle.$
Thus the image of $a$ contains a subspace of dimension $\rk e+1$
and hence $\rk a\geq \rk e+1$, as claimed.
\end{proof}

In the proof above, the special case $s=1$ deserves special attention.
In that case we have a simple method for increasing the rank 
over sufficiently large fields which works even {\em without} any 
assumption on the presence of rank one matrices. We will use
this simple observation later in Section~\ref{sec-min-gen}.

\begin{lemma}\label{lem-imker}
If $h,h''\in \Lin(U,V)$ are transformations such that
$h''\ker h\not\subseteq hU$ then
$h':=h+\alpha h''$ will be of a higher rank than $h$
except for at most $\rk h+1$ elements $\alpha\in \F$.
\end{lemma}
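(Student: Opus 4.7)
The plan is to exhibit, for all but finitely many $\alpha$, an explicit collection of $r+1$ vectors in the image of $h' := h + \alpha h''$ that are linearly independent, where $r := \rk h$. First I would unpack the hypothesis: there exists $v \in \ker h$ with $h''v \notin hU$. Next, pick $u_1,\ldots,u_r \in U$ such that $hu_1,\ldots,hu_r$ is a basis of $hU$. Combining these with the fact $h''v \notin hU$, the $r+1$ vectors $hu_1,\ldots,hu_r,h''v$ are linearly independent in $V$. The candidate vectors in the image of $h'$ will be
\[ h'u_i = hu_i + \alpha h''u_i \quad (1 \le i \le r) \qquad \text{and} \qquad h'v = \alpha h''v, \]
where the second identity uses $hv = 0$.

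The key step is to convert the linear independence of these $r+1$ vectors into the nonvanishing of a single polynomial in $\alpha$. Since $hu_1,\ldots,hu_r,h''v$ are independent, one can pick a linear projection $\pi : V \to \F^{r+1}$ such that the matrix $[\pi(hu_1)\,|\,\cdots\,|\,\pi(hu_r)\,|\,\pi(h''v)]$ is nonsingular. Let
\[ D(\alpha) := \det\bigl[\pi(h'u_1)\,\big|\,\cdots\,\big|\,\pi(h'u_r)\,\big|\,\pi(h'v)\bigr]; \]
whenever $D(\alpha) \ne 0$ the underlying vectors $h'u_1,\ldots,h'u_r,h'v$ are independent, whence $\rk h' \ge r+1 > \rk h$. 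So it remains to bound the number of roots of $D$.

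Finally, I would read off $D$ as a polynomial in $\alpha$. The last column carries an overall factor of $\alpha$, so $D(\alpha) = \alpha \cdot E(\alpha)$ with $E(\alpha) := \det\bigl[\pi(hu_1+\alpha h''u_1)\,\big|\,\cdots\,\big|\,\pi(hu_r+\alpha h''u_r)\,\big|\,\pi(h''v)\bigr]$. Each of the first $r$ columns is affine in $\alpha$, so $\deg E \le r$, and at $\alpha = 0$ the matrix reduces to the nonsingular one arranged above, giving $E(0) \ne 0$. Thus $E$ is a nonzero polynomial of degree at most $r$ with at most $r$ roots, and hence $D$ has at most $r+1$ roots, which matches the claimed bound. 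The only real obstacle is the coordinate-free bookkeeping: the projection $\pi$ is just a convenience to turn a determinant-of-vectors question into a scalar polynomial, and one could equivalently carry out the same argument via a nonzero $\Lambda^{r+1}V$-valued polynomial in $\alpha$.
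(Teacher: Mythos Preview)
Your proof is correct and follows essentially the same approach as the paper: both arguments select a vector $v\in\ker h$ with $h''v\notin hU$, set up an $(r+1)\times(r+1)$ determinant in the variable $\alpha$ using $r$ preimages of a basis of $hU$ together with $v$, and then observe that this determinant is a nonzero polynomial of degree at most $r+1$. The only cosmetic difference is packaging: the paper fixes explicit output bases $v_1,\ldots,v_r,h''v$ and reads off that the linear coefficient is the nonzero $r\times r$ minor of $h$, whereas you use a projection $\pi$ and factor $D(\alpha)=\alpha E(\alpha)$ with $E(0)\neq 0$ --- the same computation viewed from two angles.
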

\begin{proof}
Let $k=\rk h$, and let $U_0$ be a subspace of $U$ complementary
to $\ker h$. Let $u_1,\ldots,u_k$ be a basis of $U_0$ and
let $v_1,\ldots,v_k$ be a basis of the image $hU$. Choose
a vector $u_{k+1}\in\ker h$ such that $v_{k+1}:=h'' u_{k+1}\not
\in hU$. Consider the matrix of the restriction of $h+xh''$ 
to $U_0+\F u_{k+1}$ in the bases $u_1,\ldots,u_k,u_{k+1}$ and
$v_1,\ldots,v_k,v_{k+1}$. The last row of the constant
term (the matrix of $h$) is zero while the lower right entry of the linear
term (the matrix of $xh''$) is $x$. Expanding by the last
row, we obtain that the linear term of the determinant of this $(k+1)$
by $(k+1)$ matrix is $dx$, where $d\neq 0$ is the determinant of
the upper left $k\times k$ block of $h$. 
Thus the determinant
is a nonzero polynomial in $x$ of degree at most $(k+1)$
and hence the corresponding
$(k+1)$ by $(k+1)$ block of $h'=h+\alpha h''$ is nonsingular
showing that $h'$ has rank higher than $k$ unless
$\alpha$ is a root of this polynomial. 
\end{proof}

We state below a simple fact about the linear spaces of matrices
that is useful in providing a certificate for the rank maximality of a given 
matrix.

\begin{fact}\label{fac-maxrk-ineq}
Let $L\leq \Lin(U,V)$, where $U$ and $V$ are
finite dimensional spaces over the field $\F$.
Then for every $h\in L$ we have
$\rk h\le\dim U-max\{\dim W-\dim LW\ |\ W\leq U\}$.
\end{fact}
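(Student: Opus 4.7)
The plan is to prove the inequality for every fixed subspace $W\le U$ and then take the maximum. The whole argument is a one-step rank–nullity computation, and the only thing one has to notice is the trivial containment $hW\subseteq LW$ coming from $h\in L$.

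Fix $W\le U$ and consider the restriction $h|_W:W\to V$. Since $h\in L$, its image $hW$ is contained in $LW$, hence $\dim hW\le \dim LW$. Rank–nullity applied to $h|_W$ gives
$$\dim(\ker h\cap W)\;=\;\dim W-\dim hW\;\ge\;\dim W-\dim LW.$$
Since $\ker h\cap W\subseteq \ker h$, this yields $\dim\ker h\ge \dim W-\dim LW$, and therefore
$$\rk h\;=\;\dim U-\dim\ker h\;\le\;\dim U-(\dim W-\dim LW).$$
Taking the maximum over all $W\le U$ proves the stated inequality.

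There is no real obstacle: the proof is a single application of rank–nullity combined with the observation that $h\in L$ forces $hW\subseteq LW$. The only mild subtlety worth flagging in the write-up is that the maximum on the right-hand side is always nonnegative (taking $W=0$ gives the value $0$), so the bound is consistent with the trivial estimate $\rk h\le\dim U$.
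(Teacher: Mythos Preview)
Your proof is correct and essentially the same as the paper's: both fix $W\le U$, use $h\in L\Rightarrow hW\subseteq LW$ to get $\dim hW\le\dim LW$, and then convert this into a lower bound on the corank of $h$. The only cosmetic difference is that the paper splits $U=W\oplus W'$ and uses $\dim hU\le\dim hW+\dim hW'$, whereas you phrase the same inequality via rank--nullity on $h|_W$ and the inclusion $\ker h\cap W\subseteq\ker h$.
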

\begin{proof}
For any subspace $W\leq U$ 
pick a direct complement $W'$ of $W$ in $U$.
Now $\dim U-\rk h$ $=\dim U-\dim hU$ $\ge (\dim W-\dim hW)+
(\dim W'-\dim hW')$ $\ge \dim W-\dim LW$.
\end{proof}

Using Edmonds' Matroid Intersection Theorem,
Lov\'asz (Section 3, \cite{l89}) has shown that equality holds
provided that $h$ is of maximum rank and if $L$ is spanned by rank one matrices.
We give the following algorithmic generalization to the case when $L$ is 
spanned by rank one matrices and an {\em arbitrary} rank matrix.

\begin{theorem}\label{thm-witness}
Let $U$ and $V$ be two finite dimensional vector spaces
over the field $\F$, let $L\leq \Lin(U,V)$ be given
by a basis and let an $h\in L$ be also given. Suppose that 
$L$ is spanned by $h$ and certain (unknown) transformations of rank one.

1)
Then there exists a deterministic polynomial time algorithm which
decides if $h$ is an element of $L$ of maximum rank. 
If $h$ is of maximum rank then a subspace $W$ of $U$
is constructed such that 
$\rk h=\dim U-(\dim W-\dim LW)$.

2)
If $h$ is not of maximum rank then, given
rank one transformations that together with $h$ span $L$, we can compute an element $h'\in L$
with $\rk h'>\rk h$ in deterministic polynomial time. 
\end{theorem}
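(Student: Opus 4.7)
My plan is to exhibit, when $h$ is of maximum rank, a subspace $W \leq U$ containing $\ker h$ and satisfying $LW = hW$; Fact~\ref{fac-maxrk-ineq} then certifies $\rk h'' \leq \dim U - \dim \ker h = \rk h$ for every $h'' \in L$. When such a construction fails to terminate, the obstruction itself is a chain of rank-one generators to which an upper-triangular argument in the spirit of Lemma~\ref{lem-augment} produces an element of $L$ of strictly higher rank than $h$.

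For the setup, fix direct complements $U_0$ of $\ker h$ in $U$ and $V_1$ of $hU$ in $V$, and let $p\colon V \to U$ act as $(h|_{U_0})^{-1}$ on $hU$ and as $0$ on $V_1$; then $ph$ is the idempotent projection of $U$ onto $U_0$ along $\ker h$. Initialize $W := \ker h$. In each round compute $LW \subseteq V$: if $LW \not\subseteq hU$, switch to Part~2; otherwise update $W \leftarrow W + p(LW)$. The sequence grows monotonically inside $U$, hence stabilizes in at most $\dim U$ rounds. Upon stabilization, $p(LW) \subseteq W$ and $LW \subseteq hU$, so applying $h$ gives $LW = h(p(LW)) \subseteq hW$; the reverse inclusion $hW \subseteq LW$ is immediate from $h \in L$. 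With $\ker h \subseteq W$ this forces $\dim W - \dim LW = \dim(W \cap \ker h) = \dim U - \rk h$, which by Fact~\ref{fac-maxrk-ineq} completes the Part~1 certificate; the converse direction (if $h$ is not of maximum rank then the iteration must abort) will follow from the augmentation of Part~2.

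For Part~2, suppose the iteration aborts with $\ell \in L$, $x \in W_j$, $\ell x \notin hU$. Unrolling the definition of $W_j$, the vector $x$ is a linear combination of vectors of the form $p r_{i_1} \cdots p r_{i_t} v$ with $v \in \ker h$ and $r_{i_1}, \ldots, r_{i_t}$ among the given rank-one generators ($h$-steps in such products absorb because $ph$ restricts to the identity on $U_0$). By linearity and the spanning hypothesis on $L$, I may take $\ell = r_s$ itself a rank-one generator and $x = p r_{s-1} \cdots p r_1 v$; choose $s$ minimal. As in the proof of Lemma~\ref{lem-augment}, minimality forces $r_j u_i = 0$ whenever $i+1 < j$, where $u_0 := v$ and $u_i := p r_i u_{i-1}$. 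Setting $w_i := r_i u_{i-1}$ we have $h u_i = w_i$ for $i < s$ while $w_s \notin hU$. The rank-one property $r_j U \subseteq \F w_j$ then gives the upper-triangular relations $a u_0 = w_1$ and $a u_i \in w_{i+1} + \langle w_1, \ldots, w_i \rangle$ for $1 \leq i \leq s-1$, where $a := h + r_1 + \cdots + r_s \in L$; on any complement of $\langle u_1, \ldots, u_{s-1} \rangle$ in $U_0$, $a$ agrees with $h$ modulo $\langle w_1, \ldots, w_s \rangle$. Hence the image of $a$ contains $hU + \F w_s$, which has dimension $\rk h + 1$, so $\rk a > \rk h$.

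The main obstacle is porting Lemma~\ref{lem-augment}'s upper-triangular bookkeeping across from $\Elin(V)$ to the general $\Lin(U,V)$ setting, where the chain alternates between $U$-side vectors $u_i$ and $V$-side vectors $w_i$ via the pseudo-inverse $p$. The crucial feature that makes the adaptation work is that the rank-one hypothesis pins the image of each $r_j$ inside $\F w_j$, so the off-diagonal contributions remain in $\langle w_1, \ldots, w_s \rangle$; since $w_1, \ldots, w_{s-1} \in hU$ while $w_s \notin hU$, the image of $a$ gains exactly one fresh direction beyond $hU$. A secondary bookkeeping point (needed to connect the Part~1 test to the Part~2 construction) is that whenever $\ell x \notin hU$ is detected for some $\ell \in L$ given only as a combination of basis elements of $L$, the spanning hypothesis produces a rank-one generator $r$ with $rx \notin hU$, since $hx \in hU$ is automatic.
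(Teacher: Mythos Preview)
Your proof is correct and follows essentially the same approach as the paper. The only difference is cosmetic: the paper first pads so that $\dim U=\dim V$ and composes with an invertible $g:V\to U$ making $gh$ idempotent, thereby reducing directly to Lemma~\ref{lem-augment} in $\Elin(U)$; you instead work with the one-sided pseudo-inverse $p$ and redo the upper-triangular augmentation argument in $\Lin(U,V)$, but the resulting witness $W$ (namely $\ker h$ plus the $pL$-orbit of $\ker h$) and the rank-increasing element $a=h+r_1+\cdots+r_s$ are the same.
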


\begin{proof}
We may assume wlog that $\dim U=\dim V$, for otherwise we can 
pad transformations from $L$ with zeros to obtain a space
$L'\leq \Lin(U\oplus U',V\oplus V')$, where 
$\dim U\oplus U'=\dim V\oplus V'$ with some (possibly zero)
spaces $U',V'$. By padding a transformation $b\in \Lin(U,V)$
we mean the map $b'\in \Lin(U\oplus U',V\oplus V')$ which is the direct
sum of $b$ and the zero map: $b'(u,u')=(bu,0)$. 

Let $g:V\rightarrow U$ be an arbitrary nonsingular linear map such that 
$gh:U\rightarrow U$ is an idempotent. (The matrix of such a map $g$
can be obtained as the product of the matrices
corresponding to the pivoting steps in Gaussian elimination for
the matrix of $h$.)
As $g$ is invertible, $h$ is of maximum rank within $L$ iff $gh$
is of maximum rank within $gL$. Also, rank one 
generators of $L$ are mapped to rank one
generators of $gL$. If $gh$ is of maximum rank then by 
Lemma~\ref{lem-augment}, $\Env(gL)\ker gh\leq ghU$.
Conversely if $\Env(gL)\ker gh\leq ghU$
then, with $W_0:=\Env(gL)\ker gh$ and $W_1:=\ker gh$, we have 
$gLW_0, gL W_1\leq W_0\le ghU$, and $W_0\cap W_1$=0 (if $v\in W_0\cap W_1$
then $v=ghu$ for some $u\in U$ and $ghv=0$, implying $0=ghghu=ghu=v$).
Therefore with $W:=W_0+W_1\le U$ we have $gL W\leq W_0$
and $\dim U-\rk gh=\ker gh=\dim W_1=\dim W-\dim W_0\le \dim W-\dim gLW$.
Now $g$ being invertible also implies that $\dim U-\rk h\le \dim W-\dim LW$,
which together with Fact \ref{fac-maxrk-ineq} implies that $h$ has maximal 
rank. Thus if $\Env(gL)\ker gh\leq ghU$ then we can efficiently 
construct $W$ with the required property, it is a witness of the maximality
of the rank of $gh$ (resp. $h$) in $gL$ (resp. $L$). 
Thus, $h$ and hence $gh$ is not of maximum rank 
if and only if $\Env(gL)\ker gh$ is not contained in $ghU$. 
This can be decided in an obvious way. 

Furthermore, if $L$ is spanned by $h$ and (known) rank one
matrices $h_1,\ldots,h_\ell$ then 
the proof of Lemma~\ref{lem-augment}
gives a linear combination
of $gh$ and $gh_1,\ldots,gh_\ell$ of higher rank.
Multiplying by $g^{-1}$ we obtain 
an element of $L$ of rank larger than $\rk h$.
\end{proof}

It is obvious that repeated applications of Theorem \ref{thm-witness} 
completes the proof of Theorem~\ref{thm-main-1}.

We remark that the shortest product $\Pi=gh_1\cdots gh_\ell$ 
with $\Pi \ker h\not\subseteq hU$ in the proof of Lemma~\ref{lem-augment}
can be interpreted as a generalization of 
the notion of {\em augmenting paths} in the classical bipartite
matching algorithms.

\section{Module Morphism Problems and Matrix Completion}\label{sec-morph}

In this section we present hardness results
of certain problems concerning modules. The key
constructions are modules that we call 
{\em bipartite modules} as they resemble bipartite graphs.

\subsection{Bipartite modules}

Let $W_1$ and $W_2$ be two linear spaces over $\F$ and assume that
we are given a linear subspace $R\leq \Lin(W_1,W_2)$ of
linear maps from $W_1$ to $W_2$. We assume that $R$ is spanned
by $\ell$ maps: $r_1,\ldots,r_\ell$. We consider the direct sum
$W=W_1\oplus W_2$. We extend transformations $r\in R$ to linear
transformations of $W$ by letting $r$ act on $W_2$ as the zero map. 
(That is, the extension maps $(w_1,w_2)$ to $(0,rw_1)$.)
With some abuse of notation, we denote the extended map also by 
$r$ and consider $R$ as a subspace of $\Elin(W)$. 
Let $\sset$ be the set $\{r_1,\ldots,r_\ell\}$
and the map $\nu:\sset\rightarrow \Elin(W)$ defining
the $\sset$-module structure be just the identity map. This $\sset$-module
$W$ is a bipartite module.

\subsection{Universality of cyclic submodule optimization}

Assume that we are given a linear space $L$ of $\F$-linear maps
from $U$ to $V$. It would be straightforward to consider
the bipartite module $W$ for $W_1=U$, $W_2=V$ and $R=L$.
However, this module does not turn out to be useful for our purposes
and instead of it we consider another view: put $W_1=L$, $W_2=V$, 
$R=\{\mu_u|u\in U\}$, where $\mu_u(h)=h\cdot u$. Then, if
$U$ is spanned by $u_1,\ldots,u_\ell$ then $\sset$
is $\{\mu_{u_1},\ldots,\mu_{u_\ell}\}$. 
Let $(h,v)\in W=L\oplus V$. Then the $\sset$-submodule
of $W$ generated by $(h,v)$ is $\F(h,v)+(0,hU)$
and its dimension is $(1+\rk h)$ if $h$ is not the zero map.
Therefore this construction transforms matrix completion
in $L$ to cyclic submodule optimization in $W$.

\subsection{Module morphisms}

Let $U$ and $V$ be two $\F\{\sset\}$-modules. 
An $\F$-linear map $\phi\in \Lin(U,V)$ 
is an $\sset$-module {\it homomorphism} if
for every $s\in\sset$ and $u\in U$ we have $\phi(su)=s\phi(u)$.
The module homomorphism from $U$ to $V$ form a linear
subspace $\Hom_{\F\{\sset\}}(U,V)$ of $\Lin(U,V)$. 
Given the $\sset$-module structure on $U$ and $V$ in
terms of matrices over bases, a basis for the matrix space
representing $\Hom_{\F\{\sset\}}(U,V)$ can be computed with
$poly(\dim U+\dim V+|\sset|)$ field operations by
solving a system of homogeneous linear equations.

It is not difficult to construct subspaces of
$\Lin(U,V)$ which do not arise as spaces
of module homomorphisms. Thus it is natural
to ask how difficult are the matrix completion problems
in spaces of module morphisms. It turns out (as shown below) that
the cyclic submodules of a bipartite module $W$ (defined as $L\oplus V$ in the last subsection)
arise as homomorphic images of another $\sset$-module $W_0$, 
where $\sset=\{r_1,\ldots,r_\ell\}$ and
$W_0$ has basis $b_0,b_1,\ldots,b_\ell$ that by definition satisfy
$r_ib_0=b_i$, $r_ib_j=0$ ($i,j=1,\ldots,\ell$).

This shows that hard matrix completion problems
do arise in module morphism spaces. However, curiously
enough, deciding existence and construction of module 
{\it isomorphisms}, i.e., module homomorphisms 
which are bijective linear maps can be accomplished
in polynomial time (see \cite{cik97} with some restriction
for the base field and \cite{bl08} over arbitrary fields). 
We show that this is not the case for testing
existence of injective or surjective module morphisms.
 
\smallskip
{\bf Module Injection: } 
For the injective case, consider the bipartite modules
$W$ and $W_0$ discussed above. The module $W_0$ is cyclic,
it is generated by $b_0$. Therefore a module homomorphism
is determined by the image of $b_0$. In this case for
every pair $(w_1,w_2)$ there is indeed a homomorphism
with $\psi(b_0)=(w_1,w_2)$. (For $i>0$ set $\psi(b_i)=(0,r_iw_1)$.)
Consider the special case of the bipartite module
$W$ used for showing hardness of cyclic submodule  
optimization: let $L$ be a space of linear maps from $U$ to $V$,
put $W_1=L$ and $W_2=V$. Then the image of $W_0$ at the map
$\psi$, under which the image of $b_0$ is $(h,v)$, is
the subspace spanned by $(h,v),(0,hu_1),\ldots,(0,hu_\ell)$.
This $\psi$ is injective if and only if $h$ is. This
construction reduces both deciding and finding an injective transformation
in $L$ (and also nonsingular matrix completion as special case)
to deciding and finding an injective homomorphism from $W_0$ to $W$. 

\smallskip
{\bf Module Surjection: }
Existence of (resp. finding) injective
module morphisms can be transformed to the existence of
(resp. finding) surjective morphisms between 
modules by standard {\em dualization}. If $M$ is
a vector space over $\F$ then by $M^*$ we denote
the space of (homogeneous) linear functions from 
$M$ to $\F$ (that is, $M^*=\Lin(M,\F)$). If $\phi$ is an $\F$-linear map from
the space $M_1$ to $M_2$ then the map $\phi^*:M_2^*\rightarrow M_1^*$ 
given as $(\phi^*f)v=\phi (f v)$ is again a linear map.
(Note that if $\phi$ is interpreted as multiplication
of column vectors by a matrix from the left then 
$\phi^*$ can be interpreted as multiplication of row
vectors by the transposed matrix from the right.)
Furthermore, if both $M_1$ and $M_2$ are finite dimensional then
$\phi$ is injective (resp. surjective) if and only if
$\phi^*$ is surjective (resp. injective). If $M_1$ and $M_2$
are $\sset$-modules given by the maps $\nu_1$ and $\nu_2$,
then $\nu_1^*$ and $\nu_2^*$ given as $\nu_i^*(s)=\nu_i(s)^*$
make $M_1^*$ and $M_2^*$ $\sset$-modules.
Furthermore, the linear map $\phi\in \Lin(M_1,M_2)$ is
a module homomorphism from $M_1$ to $M_2$ if and only if
$\phi^*$ is a module homomorphism from $M_2^*$ to $M_1^*$.

So when given vector spaces $U, V$ over $\F$, a linear subspace $L$ of 
$\Lin(U,V)$ with $\ell:=\dim U$. We first construct modules
$W$ and $W_0$ as in the previous reduction, so that the
module homomorphism $\psi$ from $W_0$ to $W$ is injective
if and only if $h\in L$ is an injective map where $\psi(b_0)=(h,v)$.
Therefore $\Psi\in \Hom_{\F\{\sset\}}(W^*,W_0^*)$ is surjective 
if and only if for the unique $\F$-linear map $\psi:W_0\rightarrow W$
such that $\Psi=\psi^*$ we have that $h$ is injective,
where $\psi(b_0)=(h,v)$. This completes the proof of Theorem~\ref{thm-main-3}.

\section{Minimizing Number of Generators in Modules}\label{sec-min-gen}

We saw that cyclic submodule optimization is matrix completion hard. Now we 
will study the ``dual" problem of finding minimal number of generators of
a given module. 
In this section we give an {\em efficient} algorithm to minimize the number of generators 
in a given $\F\{\sset\}$-module. It depends on a greedy property of the dimension 
of submodules in so called semisimple modules 
(which will be vaguely similar to that in 
Section~\ref{sec-augment}). But we first need to summarize some basic notions and facts from 
the representation theory of algebras needed in the proof. For details, we refer
the reader to the first few chapters of the textbook \cite{p82}.

\comment{The {\em radical} of a module $V$ 
is the intersection of its maximal 
(more accurately, maximal proper) submodules. A
finite dimensional module is {\em semisimple} if its radical is the zero submodule. 
Submodules, direct sums and factor modules of semisimple
modules as well as the factor of a module by its radical 
are semisimple. (Recall that submodules,
direct sums of spaces which are $\sset$-modules inherit the $\sset$-module 
structure in a natural way and so do factor spaces 
by submodules.)}

\subsection{Preliminaries: Algebras, Modules \& their Decompositions}

Let $\F$ be an arbitrary field.
An associative algebra with identity or {\em algebra} for short
 is a vector space $\Aalg$ over $\F$ equipped with an associative 
$\F$-bilinear multiplication having a two-sided identity element 
$1_\Aalg$ with respect to the multiplicative structure. 
If $V$ is a finite dimensional vector space of $\F$ then the $\F$-linear 
transformations of $V$ form a finite dimensional algebra $\Elin(V)$.
Subalgebras of $\Elin(V)$, that is, subspaces closed under 
multiplication, containing the identity matrix are further
examples. (In contrast to Section~\ref{sec-augment}, where we considered
algebras of linear transformations not necessarily having an identity,
in this Appendix it will be convenient to consider algebras with
identity only.) An algebra {\em homomorphism} from $\Aalg$ to $\Balg$
is an $\F$-linear map $\phi:\Aalg\rightarrow\Balg$ 
also satisfying $\phi(a_1\cdot a_2)=\phi(a_1)\cdot \phi(a_2)$
and $\phi(1_\Aalg)=1_\Balg$.

A left $\Aalg$-module or an {\em $\Aalg$-module} for short is an 
$\F$-linear space $V$ equipped with a bilinear multiplication 
$\cdot:\Aalg\times V\rightarrow V$ which commutes with the
multiplication within $\Aalg$ 
(that is, $a_1\cdot(a_2\cdot v)=(a_1\cdot a_2)\cdot v)$).
(In \cite{p82}, right modules are used. Here we 
we use left modules which are somewhat more common
in the literature.) A module $V$ is {\em unital} if $1_\Aalg v=v$ for
every $v\in V$. All modules in this work are assumed 
to be unital and {\em finite dimensional} over $\F$. 

If $V$ is an $\Aalg$-module then the map 
$\nu:\Aalg\rightarrow \Elin(V)$ defined as $\nu(a)v=a\cdot v$ is
a homomorphism from $\Aalg$ into $\Elin(V)$.
We say that $V$ is a {\em faithful} $\Aalg$-module
if the kernel of $\nu$ is zero, that is,
if $a\in\Aalg$ such that $av=0$ for every $v\in V$ then $a=0$. 
If $\sset$ is a finite set then $\F\{\sset\}$,
the algebra of noncommutative polynomials over $\F$
with indeterminates from $\sset$ is an example 
of an infinite dimensional $\F$-algebra. It is the {\em free} algebra
generated by $\sset$: if $\Aalg$ is an algebra and $\nu$ is a map
from $\sset$ into $\Aalg$ then $\nu$ can be extended to
a unique algebra homomorphism from $\F\{\sset\}$ to $\Aalg$.
In view of this, an $\F\{\sset\}$-module structure on $V$
can be given by an arbitrary map $\nu:\sset\rightarrow \Elin(V)$.
Thus the notion of $\sset$-module used in this paper is
consistent with the notion of modules over free algebras.

A {\em submodule} of an $\Aalg$-module is a linear subspace
also closed under multiplication by elements of $\Aalg$.
The {\em factor space} of a submodule inherits the $\Aalg$-module
structure in a natural way and so do {\em direct sums} of linear
spaces which are $\Aalg$-modules. An $\Aalg$-module $V$ is 
called {\em simple} if it has exactly two submodules: the whole
$V$ and the zero submodule. The {\em radical} of a module is
the intersection of its maximal (more precisely, maximal proper)
submodules. A module $V$ is called {\em semisimple} if it is 
isomorphic to a direct sum of simple modules. By Section~2.7 of \cite{p82}, 
$V$ is semisimple if and only if its radical is the zero submodule.
Furthermore, the factor module of $V$ by its radical is always 
semisimple. By Section~2.5 of \cite{p82}, the isomorphism
classes of the constituents and their multiplicities in a decomposition
of a semisimple module into a direct sum of simple modules
are uniquely determined. Direct sums and homomorphic images of semisimple
modules are semisimple.

Let $V$ be a finite dimensional $\F\{\sset\}$-module
and let $\Aalg$ be the enveloping algebra $\Env(I\cup\nu(\sset))$
(the subalgebra of $\Elin(V)$ generated by the identity and $\nu(\sset)$). 
Then $\Aalg$ is the image
of $\F\{\sset\}$ under the unique algebra homomorphism from $\F\{\sset\}$
to $\Elin(V)$ extending $\nu$ and $V$ is a faithful
$\Aalg$-module in the natural way. We work with the $\Aalg$-module structures of $V$, 
its submodules and factors as they
coincide with the $\sset$-module structures of the same objects.
Assume that $V$ is semisimple.
Then by Section~4.1 of~\cite{p82}, $\Aalg$ considered as a left module
over itself by the algebra multiplication is also semisimple. 
Such algebras are called {\em semisimple}. Modules over semisimple 
algebras are semisimple, again by Section~4.1 of~\cite{p82}. 

Let $\Aalg$ be a semisimple algebra over $\F$ and let
$\Aalg$ as a left module over itself be isomorphic to
the direct sum: 
\begin{equation}\label{eqn-A}
\bigoplus_{i=1}^{t}V_i^{m_i}, 
\end{equation}
where $V_i$ are pairwise non-isomorphic 
$\Aalg$-modules. Let $V$ be an $\Aalg$-module. 
As $V$ is a homomorphic
image of at most $\dim V$ copies of
the module $\Aalg$, we have
\begin{equation}\label{eqn-V}
V\cong \bigoplus_{i=1}^{t}V_i^{s_i},
\end{equation}
where the multiplicities $s_i$ are non-negative integers.

\begin{lemma}
\label{lem-max-dim}
Let $\Aalg$ and $V$ be as above and let $\ell$ be a positive
integer. Let $U$ be a submodule of $V$ generated by $\ell$ elements.  
Then $U$ is of maximum dimension among the $\ell$-generated 
submodules of $V$ if and only if
$U\cong \bigoplus_{i=1}^{t}V_i^{d_i}$ where $d_i:=\min (s_i,\ell m_i)$.
\end{lemma}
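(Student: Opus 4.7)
The plan is to bound the $V_i$-multiplicity in any $\ell$-generated submodule by $d_i=\min(s_i,\ell m_i)$, then exhibit $\ell$ generators realizing all these bounds simultaneously. The biconditional will follow because the dimension of a semisimple module is determined by its isotypic multiplicities: $\dim(\bigoplus V_i^{a_i})=\sum a_i\dim V_i$, so maximizing dimension subject to the componentwise constraints $a_i\le d_i$ forces $a_i=d_i$ for all $i$.

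\textbf{Upper bound step.} Let $U=\Aalg u_1+\cdots+\Aalg u_\ell \cong \bigoplus_i V_i^{a_i}$. On one hand, the surjection $\Aalg^\ell \twoheadrightarrow U$ sending $(c_1,\ldots,c_\ell)\mapsto \sum_j c_j u_j$ exhibits $U$ as a quotient of $\Aalg^\ell \cong \bigoplus_i V_i^{\ell m_i}$ (by (\ref{eqn-A})); semisimplicity turns this quotient into a direct summand of $\Aalg^\ell$, so $a_i\le \ell m_i$. On the other hand, $U$ is a submodule of the semisimple $V\cong \bigoplus_i V_i^{s_i}$, hence again a direct summand, and Krull--Schmidt gives $a_i\le s_i$. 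Combining, $a_i\le d_i$ and $\dim U \le \sum_i d_i \dim V_i$.

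\textbf{Attainability step.} I would choose $u_1,\ldots,u_\ell\in V$ componentwise along the isotypic decomposition $V=\bigoplus_i V^{(i)}$ with $V^{(i)}:=V_i^{s_i}$; since $\Aalg$ preserves this decomposition, $\sum_j \Aalg u_j = \bigoplus_i \sum_j \Aalg u_j^{(i)}$, so the task decouples over $i$. By Wedderburn, $\Aalg$ acts on $V^{(i)}$ through its $i$-th factor $M_{m_i}(D_i)$ (with $D_i=\End_\Aalg(V_i)$), and $V^{(i)}$ identifies with the matrix space $D_i^{m_i\times s_i}$ on which $M_{m_i}(D_i)$ acts by left multiplication. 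For $v\in V^{(i)}$, the cyclic submodule $\Aalg v$ is the set of matrices whose rows lie in the $D_i$-row-span of $v$, and is isomorphic to $V_i^{r}$ where $r$ is that row-rank; thus the submodule generated by $\ell$ matrices has $V_i$-multiplicity equal to the $D_i$-dimension of the joint row span. Picking $\ell$ matrices in $V^{(i)}$ whose combined rows span a $D_i$-subspace of the maximal possible dimension $\min(s_i,\ell m_i)=d_i$ is elementary (each matrix contributes $m_i$ rows, while $D_i^{s_i}$ itself has $D_i$-dimension $s_i$). Assembling these choices across all isotypic blocks yields $u_1,\ldots,u_\ell$ with $\sum_j \Aalg u_j \cong \bigoplus_i V_i^{d_i}$.

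\textbf{Conclusion and main obstacle.} Together the two steps show that the maximum dimension of an $\ell$-generated submodule equals $\sum_i d_i \dim V_i$, and that it is achieved precisely when $a_i=d_i$ for every $i$, i.e., when $U\cong \bigoplus_i V_i^{d_i}$. The delicate point is the attainability step: it requires passing through the Wedderburn identification to reduce to a linear-algebra computation over the (possibly noncommutative) division rings $D_i$. A more intrinsic alternative that avoids explicit matrix coordinates is to work with $\Hom_\Aalg(V_i^{\ell m_i}, V_i^{s_i})$ as a bimodule over $D_i$ and argue that it contains elements of every rank up to $\min(s_i,\ell m_i)$, but either way the $D_i$-linear bookkeeping is the only subtle ingredient.
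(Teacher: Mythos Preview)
Your proof is correct. The upper-bound step is essentially the paper's argument, just phrased more crisply via the surjection $\Aalg^\ell\twoheadrightarrow U$ and the inclusion $U\hookrightarrow V$ rather than via projection to each isotypic quotient $V/W_i$. The attainability step, however, takes a genuinely different route. You invoke the Wedderburn coordinates, identifying each isotypic block with a matrix space $D_i^{m_i\times s_i}$ and reducing to a row-rank computation over the division ring $D_i$. The paper instead argues abstractly: since $d_i\le \ell m_i$, the module $\bigoplus_i V_i^{d_i}$ is a quotient $W/W_0$ of $W=\Aalg^\ell$ (by semisimplicity, take $W_0\cong\bigoplus_i V_i^{\ell m_i-d_i}$), hence is generated by the images of the $\ell$ standard generators of $W$; and since $d_i\le s_i$, a copy of this module sits inside $V$. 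Your approach is more hands-on and makes the generators explicit, at the cost of the Wedderburn machinery and the left/right $D_i$-linearity bookkeeping you flag as the ``delicate point''. The paper's approach sidesteps all of that --- it needs only that semisimple modules have complements and Krull--Schmidt uniqueness --- but it is less constructive, producing generators only through an unspecified isomorphism $W/W_0\cong V_0$.
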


\begin{proof}
Let $W_i$ be the sum of all simple submodules of $V$ 
not isomorphic to $V_i$, Then the $\Aalg$-module 
$V/W_i$ is isomorphic to $V_i^{s_i}$ and that the submodule dimension in $V$ 
is maximized iff it is maximized in $V/W_i$ for all $i\in[t]$. As a single 
generator in $V_i^{s_i}$ can generate a submodule of dimension at most 
that of $V_i^{\min(s_i,m_i)}$, we get that $\ell$ generators 
in $V_i^{s_i}$ can generate a submodule of dimension at most 
that of $V_i^{d_i}$. Repeating this for
every $i\in\{1,\ldots,t\}$, we obtain that the maximum dimension is at most 
the dimension of the direct sum in the statement. 

To see that this module occurs in fact as a cyclic submodule of $V$,
let $W$ be the direct sum of $\ell$ copies of $\Aalg$ 
(as a left $\Aalg$-module) and let $w_1=(1_\Aalg,0,\ldots,0),$ $\ldots,$
$w_\ell=(0,\ldots,0,1_\Aalg)$. Let $W_0$ be a submodule of $W$ isomorphic
to $\bigoplus_{i=1}^t V_i^{\ell m_i-d_i}$ and let $V_0$ be a submodule
of $V$ isomorphic to $\bigoplus_{i=1}^{t}V_i^{d_i}$. Then
$V_0\cong W/W_0$ and $W/W_0$ is generated by $\ell$ elements:
the images of $w_1,\ldots,w_\ell$ under the projection $W\rightarrow W/W_0$.
Thus $V_0$ can be generated by the images of the latter $\ell$ elements
under any isomorphism $W/W_0\cong V_0$.
\end{proof}

\subsection{A Greedy Optimization of the Submodule Dimension in Semisimple Modules}

In this section $V$ denotes a finite dimensional
$\F\{\sset\}$-module and $\Aalg$ stands for the 
enveloping algebra $\Env(\nu(\sset)\cup I)$. 
For subsets $\Balg\subseteq \Aalg$ and $U\subseteq V$
by $\Balg U$ we denote the linear span
of the products $bu$, where $b\in \Balg$ and $u\in U$.
In this context we omit braces around one-element sets.
In particular, for $v\in V$, the submodule generated by
$v$ is $\Aalg v$.

The {\em annihilator} $\Ann_\Aalg(U)$ of 
$U\subseteq V$ is $\{a\in \Aalg|au=0\mbox{~for every~}u\in U\}$.
Note that the annihilator $\Ann_\Aalg(v)$ of the single element 
$v\in V$ is  just the kernel of the linear map $\mu_v:\Aalg\rightarrow V$
given as $\mu_v(a)=av$. The following lemma states that
if the rank of $\mu_v$ is not maximal then we are in the situation of
Lemma~\ref{lem-imker}. 

\begin{lemma}\label{lem-cycrank}
Assume that $V$ is semisimple. Then, for an arbitrary $u\in V$, 
$\dim \Aalg u=\max\{\dim \Aalg u'|u'\in V\}$ iff
$\Ann_\Aalg(u)V\subseteq \Aalg u$. 

Furthermore, if 
$\Ann_\Aalg(u)V\not\subseteq \Aalg u$ then an element $u'$ with
$\dim \Aalg u'>\dim \Aalg u$ can be constructed 
 using $poly(|\sset|+\dim V)$ operations in $\F$.
\end{lemma}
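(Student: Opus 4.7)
The plan is to exploit the Wedderburn structure of the action of $\Aalg$ on $V$. By replacing $\Aalg$ with its image $\Aalg/\Ann_\Aalg(V)$ in $\Elin(V)$ (which alters neither $\Aalg u$ nor $\Ann_\Aalg(u)V$), I may assume that $V$ is a faithful semisimple $\Aalg$-module, so that $\Aalg$ itself is semisimple by Section~4.1 of~\cite{p82}. Then $\Aalg\cong\prod_{i=1}^{t}\Aalg_i$ with each $\Aalg_i$ simple and $\Aalg_i\cong V_i^{m_i}$ as a left module, while $V\cong\bigoplus_i V^{(i)}$ splits into isotypic components $V^{(i)}\cong V_i^{s_i}$. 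Writing $u=\sum_i u^{(i)}$, all three quantities in the lemma decompose componentwise: $\Aalg u=\bigoplus_i \Aalg_i u^{(i)}$, $\Ann_\Aalg(u)=\prod_i \Ann_{\Aalg_i}(u^{(i)})$, and $\Ann_\Aalg(u)V=\bigoplus_i \Ann_{\Aalg_i}(u^{(i)})V^{(i)}$, so the containment question reduces to the individual simple components.

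In each component the key observation is a dichotomy driven by faithfulness: every nonzero left ideal $I$ of the simple algebra $\Aalg_i$ satisfies $IV_i=V_i$, because $IV_i$ is a nonzero $\Aalg_i$-submodule of the simple module $V_i$. Letting $r_i$ be the multiplicity of $V_i$ in $\Aalg_i u^{(i)}$, the short exact sequence $0\to \Ann_{\Aalg_i}(u^{(i)})\to\Aalg_i\to\Aalg_i u^{(i)}\to 0$ splits by semisimplicity and identifies $\Ann_{\Aalg_i}(u^{(i)})\cong V_i^{m_i-r_i}$, which is nonzero precisely when $r_i<m_i$. Hence $\Ann_{\Aalg_i}(u^{(i)})V^{(i)}$ equals $V^{(i)}\cong V_i^{s_i}$ when $r_i<m_i$ and equals $0$ when $r_i=m_i$, whereas $\Aalg_i u^{(i)}\cong V_i^{r_i}$; comparing multiplicities gives $\Ann_{\Aalg_i}(u^{(i)})V^{(i)}\subseteq \Aalg_i u^{(i)}$ if and only if $r_i=\min(s_i,m_i)$. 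Taking the conjunction over $i$ and invoking Lemma~\ref{lem-max-dim} with $\ell=1$ identifies this with maximality of $\dim\Aalg u$.

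For the algorithmic half, assume $\Ann_\Aalg(u)V\not\subseteq \Aalg u$. Testing the basis vectors of $V$ locates in polynomial time a $v$ with $\Ann_\Aalg(u)v\not\subseteq \Aalg u$ (otherwise the containments would sum to $\Ann_\Aalg(u)V\subseteq \Aalg u$). Regarding $\mu_u,\mu_v$ as elements of $\Lin(\Aalg,V)$ and using $\ker\mu_u=\Ann_\Aalg(u)$ and $\mu_v(\Ann_\Aalg(u))=\Ann_\Aalg(u)v$, the hypothesis translates to $\mu_v(\ker\mu_u)\not\subseteq\mu_u(\Aalg)$, so Lemma~\ref{lem-imker} yields a scalar $\alpha\in\F$ with $\rk(\mu_u+\alpha\mu_v)>\rk\mu_u$. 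Because $\mu_u+\alpha\mu_v=\mu_{u+\alpha v}$ and $\rk\mu_w=\dim\Aalg w$, the element $u':=u+\alpha v$ satisfies $\dim\Aalg u'>\dim\Aalg u$, and every step runs in $poly(|\sset|+\dim V)$ field operations.

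The main obstacle I anticipate is ensuring a suitable $\alpha$ lies in $\F$ when the base field is very small: Lemma~\ref{lem-imker} only guarantees that the bad scalars form a set of size at most $\rk\mu_u+1\leq\dim V+1$, so a naive search through $\F$ succeeds only once $|\F|$ exceeds this bound. For small $\F$ I would fall back on the Wedderburn description: after locating a component $V^{(i)}$ with $r_i<\min(s_i,m_i)$, augmenting $r_i$ by one reduces to a concrete matrix rank-augmentation problem over the skew field $D_i=\End_{\Aalg_i}(V_i)$, which is constructive over any base field. Verifying that the Wedderburn decomposition (or at least enough of it to identify a deficient component and act within it) can itself be computed within the promised bound, via standard algorithms for semisimple algebras, is the principal technical point.
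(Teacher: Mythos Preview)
Your proof of the equivalence is correct and runs parallel to the paper's, though you organize it more cleanly via an explicit Wedderburn decomposition: where you read off the dichotomy $\Ann_{\Aalg_i}(u^{(i)})V^{(i)}\in\{0,V^{(i)}\}$ directly from the multiplicities $r_i$, the paper argues one implication by contradiction (assuming $\Ann_\Aalg(u)V\subseteq\Aalg u$, deducing $\Ann_\Aalg(u)\subseteq\Ann_\Aalg(V_i)$, and then that the multiplicity of $V_i$ in $\Aalg u\cong\Aalg/\Ann_\Aalg(u)$ is already $m_i$). The underlying content is the same.

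The gap is in the constructive half. Your primary route through Lemma~\ref{lem-imker} needs $|\F|>\rk\mu_u+1$, as you note; but your fallback---compute enough of the Wedderburn decomposition to locate a deficient component and do rank augmentation over $D_i=\End_{\Aalg_i}(V_i)$---is not a proof: over arbitrary (in particular small finite) base fields, producing the simple components and working explicitly in a skew field is substantial extra machinery that you have not shown fits the claimed $poly(|\sset|+\dim V)$ bound. The paper sidesteps this entirely with an elementary, field-size-independent construction. Having found $b\in\Ann_\Aalg(u)$ and $w\in V$ with $bw\notin\Aalg u$, it computes (by solving a linear system) a projection $\pi$ of $V$ onto $\Aalg u$ commuting with the $\Aalg$-action; semisimplicity guarantees such $\pi$ exists. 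Setting $w'=(I-\pi)w$ and $u'=u+w'$, one has $\Aalg u'+\ker\pi=\Aalg u+\ker\pi=V$, while $bu'=bw'$ is a nonzero element of $\ker\pi\cap\Aalg u'$, forcing $\dim\Aalg u'>\dim V-\dim\ker\pi=\dim\Aalg u$. No scalar search, no decomposition into simple factors. This is precisely the point of the Remark following the lemma: the second part is \emph{especially interesting for small base fields where Lemma~\ref{lem-imker} does not apply}, and the complementary-submodule trick is what replaces it.
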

\begin{remark}
The lemma generalizes a result of Babai and R\'onyai
which was used in \cite{br90} for solving the
cyclic submodule optimization in modules over 
simple algebras.   
The proof can be found in \cite{cik97}. For completeness,
we discuss it here as well. 
The second part of the 
lemma is especially interesting for small base fields where
Lemma~\ref{lem-imker} does not apply.
\end{remark}
\begin{proof}
Let $V$ be a semisimple $\sset$-module and let
$\Aalg=\Env(I\cup \nu(\sset))$. Let $\Aalg$
resp.~$V$ be decomposed as in (\ref{eqn-A}) resp.~(\ref{eqn-V}).
Let $u\in V$. Assume that the dimension of the submodule
$\Aalg u$ is not maximal. Then, by Lemma~\ref{lem-max-dim},
there exists an index $i$ such that the multiplicity of
$V_i$ in $\Aalg u$ is less than both $s_i$ and $m_i$.
Let $W$ be the submodule of $V$ which is the
direct sum of the constituents of $V$ not isomorphic 
to $V_i$. Then $V/W\cong V_i^{s_i}$ and
$V/(W+\Aalg u)$ is isomorphic $V_i^h$ with
some $h>0$. Recall that for a subset $X$ of $V$
the annihilator of $X$ in $\Aalg$, denoted by
$\Ann_\Aalg(X)$ is $\{a\in\Aalg|ax=0\mbox{~for every~}x\in X\}$.
Assume that $\Ann_\Aalg(u)V\subseteq \Aalg u$.
Then every element of $\Ann_\Aalg(u)$ act as zero
on the factor module $V/\Aalg u$ and hence also on
the factor $V/(W+\Aalg u)$. As the latter module is isomorphic
to $V_i^h$ we obtain that $\Ann_\Aalg(u)\subseteq \Ann_\Aalg(V_i)$.
Recall that the map $\mu_u:\Aalg\rightarrow V$ is given
as $\mu_u(a)=au$. It is an $\Aalg$-module homomorphism
from the left module $\Aalg$ to $V$. Its kernel is
$\Ann_\Aalg(u)$ and its image is $\Aalg u$. Therefore
$\Aalg u\cong \Aalg/\Ann_\Aalg(u)$. Now $\Ann_\Aalg(V_i)$
is also an $\Aalg$-submodule of $\Aalg$. Let $L$ be a submodule
of $\Aalg$ isomorphic to $V_i$. We claim that $LV_i\neq 0$.
Indeed, if $LV_i=0$ then, by the assumed isomorphism, $LL=0$
as well, which is impossible by Section 3.2 of~\cite{p82}.
The claim implies that the multiplicity of $V_i$
in $\Ann_\Aalg(V_i)$ is zero and the same holds
in $\Ann_\Aalg(u) \subseteq \Ann_\Aalg(V_i)$. But then
the multiplicity of $V_i$ in the factor module
$\Aalg/\Ann_\Aalg(u)\cong \Aalg u$ is $m_i$. 
This contradiction finishes the proof of:
if $\Aalg u$ is not of maximum dimension then
in fact $\Ann_\Aalg(u)V\not\subseteq \Aalg u$.

To see the reverse implication,
assume that $\Ann_\Aalg(u)V\not\subseteq \Aalg u$
and let $w\in V$ and $b\in \Ann_\Aalg(u)$ such that
$bw \not\in \Aalg u$. 
 By Section 2.4 of \cite{p82}, there exists 
a submodule $W'$ of $V$ such that $W'\cap \Aalg u=0$
and $W'+\Aalg u=V$. Write $w=au+w'$ where $a\in\Aalg$
and $w'\in W'$. Put $u'=u+w'$. As $\Aalg w'\in W'$,
we have $\Aalg u'+W'= \Aalg u+W'$. On the other hand,
from $bw \not\in \Aalg u$ but $bau\in\Aalg u$ we infer
that $bw'$ is a nonzero element of $W'$ and by 
the equality $bu'=bu+bw'=bw'$, it is also an element of $\Aalg u'$.
Therefore $\dim \Aalg u'>\dim V-\dim W'=\dim \Aalg u$,
as required. 

For a polynomial time implementation of the
construction above, notice that a basis for
$\Ann_\Aalg(u)$ can be found by solving 
a system of linear equations. Then $b$ and $w$
can be found by testing membership of 
products of pairs
of basis elements for $\Ann_\Aalg(u)$ and those for
$V$. To compute a direct complement of
$\Aalg u$, we first compute a projection
$\pi$ of $V$ onto $\Aalg u$ such that
$\pi a= a \pi$ for every element $a\in\Aalg$
(equivalently, for every element of
a system of generators for $\Aalg$,
say $\nu(\sset)$). (Recall that a projection
$\pi$ onto a subspace $V'$ of $V$ is
a map whose image is $V'$ and it acts
as the identity on $V'$. If $W'$
is submodule complementary to $\Aalg u$
then the unique linear map which is the identity on
$\Aalg u$ and zero on $W'$ is a projection
onto $\Aalg u$ which commutes with
the action of $\Aalg$ on $V$.)
Once $\pi$ is constructed we take
$\pi'=I-\pi$. It is straightforward to see
that the image $W'=\pi'V$ is in fact
a direct complement of $\Aalg u$. The
element $w'$ in the argument above is
then just $\pi'w$ and $u'=u+\pi'w$.
This finishes the proof of Lemma~\ref{lem-cycrank}.
\end{proof}

The next lemma can be used to give a generalization
for submodules generated by larger systems (eg. noncyclic modules).

\begin{lemma}\label{lem-multrank}
Assume that $V$ is semisimple. Then, for arbitrary positive integer $\ell$
and for elements $u_1,\ldots,u_\ell\in V$, 
$\dim \Aalg\{u_1,\ldots,u_\ell\}=
\max\{\dim\Aalg U|U\subseteq V,\#{U}\leq \ell\}$
if and only if for every $i\in [\ell]$, the $\sset$-submodule
generated by $u_i+W_i$ in the factor module $V/W_i$
is of maximum dimension, where $W_i$ denotes the submodule
generated by $u_1,\ldots,u_{i-1},u_{i+1},\ldots,u_\ell$.
\end{lemma}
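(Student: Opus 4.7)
My plan is to reduce both directions to the isotypic decomposition $V\cong \bigoplus_i V_i^{s_i}$ of (\ref{eqn-V}) and the bookkeeping provided by Lemma~\ref{lem-max-dim}. For the tuple $(u_1,\ldots,u_\ell)$ let $c_i$, $c_i^{(j)}$, and $e_k$ denote the multiplicities of the simple module $V_i$ in $\Aalg\{u_1,\ldots,u_\ell\}$, in $W_j$, and in $\Aalg u_k$ respectively. The identity
\[
\dim \Aalg\{u_1,\ldots,u_\ell\}=\dim W_i+\dim \Aalg(u_i+W_i),
\]
which follows from $\Aalg\{u_1,\ldots,u_\ell\}=W_i+\Aalg u_i$ together with the third isomorphism theorem, will drive both directions. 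The ``only if'' direction is then immediate: if the left-hand side is globally maximum, replacing $u_i$ by any $u_i'\in V$ cannot enlarge it, which by the identity forces $\dim \Aalg(u_i+W_i)$ to be maximum among cyclic submodules of $V/W_i$.

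For the ``if'' direction, by Lemma~\ref{lem-max-dim} applied to $V$ it suffices to show $c_i=\min(s_i,\ell m_i)$ for every $i$. If $c_i=s_i$ we are done, so assume $c_i<s_i$. Applying Lemma~\ref{lem-max-dim} to the cyclic submodule of the semisimple quotient $V/W_j$ generated by $u_j+W_j$ (which is of maximum dimension by hypothesis), the identity above yields $c_i-c_i^{(j)}=\min(s_i-c_i^{(j)},m_i)$; since $c_i<s_i$ this forces $c_i^{(j)}=c_i-m_i$ for every $j\in[\ell]$.

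The main step is deducing $c_i=\ell m_i$ from this uniform-drop condition. I would pass to the $V_i$-isotypic components via the functor $\Hom_\Aalg(V_i,-)$: let $D=\End_\Aalg(V_i)$, a division algebra by Schur's lemma, and write $T$ and $T_k$ for the $V_i$-isotypic components of $\Aalg\{u_1,\ldots,u_\ell\}$ and of $\Aalg u_k$ respectively, viewed as $D$-vector spaces. Then $\dim_D T=c_i$, $\dim_D T_k=e_k\leq m_i$ (since $\Aalg u_k$ is cyclic), $T=\sum_k T_k$, and $\dim_D \sum_{k\neq j}T_k=c_i-m_i$. The chain
\[
c_i=\dim_D T\leq \dim_D T_j+\dim_D\!\!\sum_{k\neq j}T_k = e_j+(c_i-m_i)\leq m_i+(c_i-m_i)=c_i
\]
forces equality throughout, so $e_j=m_i$ and $T_j\cap\sum_{k\neq j}T_k=0$ for every $j$. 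Hence the sum $T=\bigoplus_k T_k$ is direct, giving $c_i=\sum_k e_k=\ell m_i$, as desired.

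The chief obstacle is setting up the $D$-linear algebra picture via Schur's lemma and the isotypic functor; once this framework is in place, the transversality calculation is elementary. Along the way I would invoke the standard facts from~\cite{p82} that the $V_i$-isotypic component of a sum of submodules equals the sum of isotypic components, and that in a semisimple module multiplicities are additive in short exact sequences, so that the quantities $c_i$, $c_i^{(j)}$, and $e_k$ are unambiguously defined and compatible with the sum and quotient operations used above.
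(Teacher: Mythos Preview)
Your proof is correct and follows essentially the same route as the paper: both directions hinge on the identity $\Aalg\{u_1,\ldots,u_\ell\}=W_j+\Aalg u_j$, and for the ``if'' direction both arguments fix a simple $V_i$, split into the cases $c_i=s_i$ versus $c_i<s_i$, and in the latter case show that the $V_i$-isotypic components of the cyclic modules $\Aalg u_k$ each have full multiplicity $m_i$ and are pairwise independent, giving $c_i=\ell m_i$. The only cosmetic difference is that you pass through the functor $\Hom_\Aalg(V_i,-)$ and do the transversality count in $D$-vector spaces, whereas the paper works directly with the isotypic submodules $U_{kj}\subseteq V$ and their multiplicities; the underlying computation (your chain of inequalities forcing $e_j=m_i$ and $T_j\cap\sum_{k\neq j}T_k=0$) is the same as the paper's claim that $U_{ij}\cap W_i=0$ with $U_{ij}\cong V_j^{m_j}$.
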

\begin{proof}
Let $V$ be a semisimple $\sset$-module and let
$\Aalg=\Env(I\cup \nu(\sset))$. Let $\Aalg$
resp.~$V$ be decomposed as in (\ref{eqn-A}) resp.~(\ref{eqn-V}).
Let $u_1,\ldots,u_\ell\in V$ and let 
$W_i=\Aalg(\{u_1,\ldots,u_\ell\}\setminus \{u_i\})$.
As $\Aalg\{u_1,\ldots,u_\ell\}=\Aalg u_i+W_i$,
it is obvious that if, for some index $i$ there is an element $u_i'$ such
that modulo $W_i$, $\Aalg u_i'$ has a larger dimension than $\Aalg u_i$,
then replacing $u_i$ with $u_i'$ results in a system generating
a submodule of larger dimension. 

To see the reverse implication let $W=\Aalg\{u_1,\ldots,u_\ell\}$
and assume that for every $i$, the submodule of $V/W_i$
generated by $u_i+W_i$, that is, $W/W_i$ is a maximal
dimensional cyclic submodule of $V/W_i$. Let
$j\in \{1,\ldots,t\}$. By Lemma~\ref{lem-max-dim},
for every $i$, the multiplicity of $V_j$ in $W/W_i$ 
equals either the multiplicity of $V_j$ in $V/W_i$
or it is just $m_j$. If for some index $i$ the former
is the case then the multiplicity of $V_j$ in $V/W$
is zero. Otherwise the multiplicity of $V_j$ in
$W/W_i$ is $m_j$ for every index $i$. In the former
case the multiplicity of $V_j$ in $W$ is the maximum possible
among all submodules. Assume the latter case and let
$U_{ij}$ denote the direct sum of the constituents
of $\Aalg u_i$ isomorphic to $V_j$. Then, for every index $i$,
we have $U_{ij}\cap W_i =0$ and $U_{ij}$
is isomorphic to a direct sum of $m_j$ copies of $V_j$
as otherwise the multiplicity of $V_j$ in $W/W_i$ would
be less than $m_j$. Thus $U_{ij}$ intersects
$\sum_{i'\neq i}U_{i'j}$ trivially therefore
they form an independent system and hence
$\sum_{i=1}^\ell U_{ij}\cong V_j^{\ell m_j}$ showing
that the multiplicity of $V_j$ is optimal in this
case as well. Repeating this for every irreducible
module $V_j$, we obtain that the dimension
of $W$ is indeed the maximum possible.
This finishes the proof of Lemma~\ref{lem-multrank}.
\end{proof}

The two lemmas above together with Lemma~\ref{lem-imker}
immediately give the following. 

\begin{proposition}\label{pro-cik-gen}
Let $v_1,\ldots,v_n$ be a basis of the semisimple
$\F\{\sset\}$-module $V$. 
Assume that $u_1,\ldots,u_\ell$ are elements of $V$
such that the submodule generated by
$u_1,\ldots,u_\ell$ is not of maximum dimension
among the submodules of $V$ generated by at most $\ell$
elements. If the $\F\{\sset\}$-module structure on
$V$ is given by an array of matrices, then we can find an index $i$
and construct $u_i'\in V$ using $poly(|\sset|+n)$ operations 
such that replacing $u_i$ with $u_i'$
results in a submodule of larger dimension.

Furthermore, if $|\F|>n$ then there exist indices $i\in[\ell]$,
$j\in[n]$ such that replacing $u_i$ with $(u_i+\omega v_j)$
results in a submodule of larger dimension
except for at most $n$ elements $\omega$ from $\F$.
\end{proposition}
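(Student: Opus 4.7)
The plan is to combine the three preceding lemmas in a pipeline: Lemma~\ref{lem-multrank} reduces suboptimality of an $\ell$-generated submodule to suboptimality of a \emph{cyclic} submodule inside a suitable factor module; Lemma~\ref{lem-cycrank} provides the algorithmic witness there for the first assertion; and Lemma~\ref{lem-imker} upgrades the construction to the random-perturbation form required for the second assertion. Throughout, if $V$ is semisimple then each factor $V/W_i$ is also semisimple (factors of semisimple modules are semisimple), so the two cyclic-case lemmas apply to the factor modules.

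For the first part, I would proceed as follows. Form, for each $i\in[\ell]$, the submodule $W_i=\Aalg(\{u_1,\ldots,u_\ell\}\setminus\{u_i\})$ by closing $\{u_j : j\ne i\}$ under the action of $\nu(\sset)$; this is standard linear algebra. By the contrapositive of Lemma~\ref{lem-multrank}, the assumed non-maximality of $\dim\Aalg\{u_1,\ldots,u_\ell\}$ forces some index $i$ for which $u_i+W_i$ does \emph{not} generate a maximal-dimensional cyclic submodule of $V/W_i$. Such an $i$ is detected by testing the criterion $\Ann_\Aalg(u_i+W_i)\cdot(V/W_i)\subseteq\Aalg(u_i+W_i)$ of Lemma~\ref{lem-cycrank}; the criterion is falsified for exactly the bad index. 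Applying the constructive part of Lemma~\ref{lem-cycrank} to $V/W_i$ then yields, in polynomial time, an element $u_i'\in V$ whose coset $u_i'+W_i$ generates a strictly larger cyclic submodule of $V/W_i$. Since $\Aalg\{u_1,\ldots,u_i',\ldots,u_\ell\}=\Aalg u_i'+W_i$, this replacement increases the total submodule dimension by the same amount, proving the first claim.

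For the second part, fix the bad index $i$ produced above and put $h:=\mu_{u_i+W_i}\in\Lin(\Aalg,V/W_i)$, so $\ker h=\Ann_\Aalg(u_i+W_i)$ and $h\Aalg=\Aalg(u_i+W_i)$. The failure of the cyclic-maximality criterion gives $b\in\ker h$ and $v\in V$ with $bv+W_i\notin h\Aalg$; expanding $v=\sum_j\alpha_j v_j$ in the given basis, at least one basis vector $v_j$ must itself satisfy $bv_j+W_i\notin h\Aalg$, for otherwise the linear combination would lie in $h\Aalg$ too. Set $h'':=\mu_{v_j+W_i}$; then $h''(\ker h)\not\subseteq h\Aalg$, so Lemma~\ref{lem-imker} applies and shows that $h+\omega h''=\mu_{u_i+\omega v_j+W_i}$ has strictly larger rank than $h$ for all $\omega\in\F$ outside a set of size at most $\rk h+1$. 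Since $u_i+W_i$ does not generate a maximal-dimensional cyclic submodule of $V/W_i$, we have $\rk h=\dim\Aalg(u_i+W_i)<\dim V/W_i\le n$, so $\rk h+1\le n$. Translating the rank increase in $V/W_i$ back to $V$ (using $\dim(\Aalg u+W_i)=\dim\Aalg(u+W_i)+\dim W_i$) gives exactly the stated conclusion, and the hypothesis $|\F|>n$ guarantees at least one usable $\omega$.

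The main obstacle is not any single step but the careful bookkeeping with the factor modules $V/W_i$: one must keep track that the rank in Lemma~\ref{lem-imker} is measured inside $V/W_i$ and collapses to a bound of $n$ only because non-maximality of the cyclic submodule forces a strict inequality there, and one must verify that replacing an element by a single basis vector (rather than a generic element of $V$) still violates the containment $h''\ker h\subseteq h\Aalg$. Computing annihilators, factor modules, and the $\Aalg$-linear projection onto $\Aalg u$ that was already implemented inside the proof of Lemma~\ref{lem-cycrank} supplies the remaining polynomial-time ingredients.
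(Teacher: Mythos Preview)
Your proposal is correct and follows exactly the approach the paper intends: the paper's ``proof'' is literally the sentence ``The two lemmas above together with Lemma~\ref{lem-imker} immediately give the following,'' and you have correctly unpacked this pipeline --- Lemma~\ref{lem-multrank} to locate the bad index $i$, Lemma~\ref{lem-cycrank} for the constructive first part in the factor $V/W_i$, and Lemma~\ref{lem-imker} (after passing to a basis vector $v_j$ by linearity) for the second part, with the bound $\rk h+1\le n$ coming from the strict non-maximality of $\Aalg(u_i+W_i)$ inside $V/W_i$.
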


The above greedy property for the submodules of a semisimple 
module gives us the following technical lemma for {\em general} modules.
It will be useful in the subsequent algorithm for optimizing the number of 
generators in any module without computing the radical explicitly.

\begin{lemma}\label{lem-cik-compl}
Let $v_1,\ldots,v_n$ be a basis of the
$\sset$-module $V$ 
which can be generated by $\ell$ elements 
and let
$u_1,\ldots,u_\ell$ be elements of $V$
such that $U=\Aalg\{u_1,\ldots,u_\ell\}<V$.
If $W$ is a nonzero submodule such that $V=U\oplus W$
then there exist $i\in[\ell]$, $j\in[n]$ 
such that for 
$U':=\Aalg\{u_1,\ldots,u_i+\lambda v_j,\ldots,u_\ell\}$,
$V=U'+W$ but $U'\cap W\neq\{0\}$
except for at most $2n$ elements $\lambda\in \F$.
\end{lemma}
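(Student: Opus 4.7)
The plan is to reduce the lemma to a rank-increase problem for the evaluation map $\psi:\Aalg^\ell\to V$ given by $(a_1,\ldots,a_\ell)\mapsto\sum_k a_k u_k$ (whose image is $U$), and then apply Lemma~\ref{lem-imker}. Setting $\psi''_{ij}(a_1,\ldots,a_\ell):=a_iv_j$, the replacement $u_i\mapsto u_i+\lambda v_j$ corresponds to the perturbed map $\psi+\lambda\psi''_{ij}$, whose image is exactly $U'$. The two required conclusions then translate into two linear-algebraic conditions: (i) $V=U'+W$ is equivalent to surjectivity of $\pi\circ(\psi+\lambda\psi''_{ij}):\Aalg^\ell\to V/W$, where $\pi:V\to V/W$ is the quotient; since $V=U\oplus W$ makes $\pi|_U$ an isomorphism, $\pi\circ\psi$ is already surjective at $\lambda=0$, and surjectivity of the perturbed map amounts to the non-vanishing of a polynomial in $\lambda$ of degree at most $\dim(V/W)=\dim U$, excluding at most $\dim U$ bad values; (ii) given (i), $U'\cap W\neq\{0\}$ is equivalent to $\dim U'>\dim U=\rk\psi$, and by Lemma~\ref{lem-imker} this holds for all but at most $\rk\psi+1\le\dim U+1$ values of $\lambda$, provided some $(i,j)$ can be chosen with $\psi''_{ij}\ker\psi\not\subseteq U$.

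The main obstacle is establishing the existence of such an $(i,j)$. I plan to argue by contradiction: if $\psi''_{ij}\ker\psi\subseteq U$ for every $(i,j)$, then since the $v_j$ span $V$, $(a_1,\ldots,a_\ell)\in\ker\psi$ forces $a_iV\subseteq U$ for every $i$, i.e., $\ker\psi\subseteq\mathfrak{a}^\ell$ where $\mathfrak{a}:=\Ann_\Aalg(V/U)$. The direct-sum hypothesis is what drives the contradiction here: since $W$ is a submodule and $\mathfrak{a}V\subseteq U$, we obtain $\mathfrak{a}W\subseteq U\cap W=\{0\}$, hence $\mathfrak{a}V=\mathfrak{a}U$ and therefore $V/\mathfrak{a}V\cong U/\mathfrak{a}U\oplus W$ as $\Aalg$-modules. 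Combining $\psi(\mathfrak{a}^\ell)=\mathfrak{a}U$ with $\ker\psi\subseteq\mathfrak{a}^\ell$ then shows that $\psi$ descends to an isomorphism $(\Aalg/\mathfrak{a})^\ell\cong U/\mathfrak{a}U$, whence $\dim U/\mathfrak{a}U=\ell\dim(\Aalg/\mathfrak{a})$. On the other hand, $V$ being $\ell$-generated over $\Aalg$ makes $V/\mathfrak{a}V$ an $\ell$-generated $\Aalg/\mathfrak{a}$-module, forcing $\dim V/\mathfrak{a}V\le\ell\dim(\Aalg/\mathfrak{a})=\dim U/\mathfrak{a}U$. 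Combined with the direct-sum decomposition of $V/\mathfrak{a}V$, this yields $\dim W\le 0$, contradicting $W\neq\{0\}$.

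Summing the two exceptional-value bounds then yields at most $\dim U+(\dim U+1)\le 2(n-1)+1<2n$ bad values of $\lambda$, as required. The heart of the argument is thus the contradiction step, which leverages the submodule-complement hypothesis to convert the $\ell$-generation of $V$ into a non-trivial constraint on $\ker\psi$ that cannot coexist with $\ker\psi\subseteq\mathfrak{a}^\ell$.
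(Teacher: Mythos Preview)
Your proof is correct and takes a genuinely different route from the paper's.

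The paper's argument passes to a semisimple quotient: it sets $V_0$ equal to the direct sum of the radicals of $U$ and $W$, so that $V/V_0\cong U/U_0\oplus W/W_0$ is semisimple, and then invokes Proposition~\ref{pro-cik-gen} (which in turn rests on Lemmas~\ref{lem-cycrank} and~\ref{lem-multrank}) to locate an $(i,j)$ for which the image of $U'$ in $V/V_0$ grows. The semisimple structure theory is doing the real work of finding the good coordinate.

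You sidestep the radical and the semisimple machinery entirely. Your key idea is the two-sided ideal $\mathfrak{a}=\Ann_\Aalg(V/U)$: the hypothesis $V=U\oplus W$ forces $\mathfrak{a}W=0$, so $V/\mathfrak{a}V\cong (U/\mathfrak{a}U)\oplus W$, while the assumption ``no good $(i,j)$'' pins $\ker\psi$ inside $\mathfrak{a}^\ell$ and makes the induced map $(\Aalg/\mathfrak{a})^\ell\to U/\mathfrak{a}U$ an isomorphism. Comparing this with the $\ell$-generation bound $\dim V/\mathfrak{a}V\le\ell\dim(\Aalg/\mathfrak{a})$ kills $W$. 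This is a clean dimension count that uses only the complement hypothesis and Lemma~\ref{lem-imker}.

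What each buys: the paper's approach plugs into the semisimple framework it has already built (and which it needs anyway for small fields, where Lemma~\ref{lem-imker} is unavailable). Your approach is more self-contained for this particular lemma and gives a slightly sharper exceptional bound ($2\dim U+1\le 2n-1$ versus the paper's $\dim V/V_0+\dim V$), but it does not obviously extend to the small-field regime.
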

\begin{proof}
Let $U_0, W_0$ be the radicals of $U, W$ respectively.
Let $V_0=U_0\oplus W_0$. Then the factor module 
$V/V_0\cong U/U_0\oplus W/W_0$ is semisimple
and we can apply the preceding 
Proposition \ref{pro-cik-gen} to choose $i\in[\ell],j\in[n]$ such 
that the number of $\lambda$'s, for which the dimension of 
$(U'+V_0)/V_0$ is not 
larger than the dimension of $(U+V_0)/V_0$, is 
at most $\dim V/V_0$. Also for the same
$i,j$ the $\lambda$'s, for which 
$\{u_1,\ldots,u_i+\lambda v_j,\ldots,u_\ell\}\cup W$ do not span
the whole of $V$, are the roots of a nonzero $\F$-polynomial of degree 
at most $\dim V$. Thus for this $i,j$ the number $\lambda$'s, for which either
$\dim U'\le \dim U$ or $V\ne U'+W$, is at most
$\dim V/V_0+\dim V\leq 2n$. 
\end{proof}

\subsection{Algorithm for Finding $\ell$ Generators}

Using the previous Lemma, now we describe an iterative algorithm to find 
a minimal set of generators of a given module over a sufficiently large
ground field.

\smallskip\noindent
{\bf Input:} An $\Aalg$-module $V$ given in terms of a set of generators. 
We assume that $\Aalg$ is an $\F$-algebra where $|\F|>2\dim V$.
\\
{\bf Output:} A set of at most $\ell$ elements generating $V$ over $\Aalg$.
\\
{\bf Algorithm:}

\begin{itemize}
\item[0)] 
Initially pick any irredundant generating set
$\{u_1,\ldots,u_\ell,u_{\ell+1},\ldots\}$, 
set $U:=\Aalg \{u_1,\ldots,u_\ell\}$ and $W:=\Aalg\{u_{\ell+1},\ldots\}$. Then $V=U+W$. 
\\
{\tt Outer loop:}

\item[1)]
Set $W':=U \cap W$. 

\item[2)]
If $W'=W$ then output $U$ and exit.
\\
{\tt Inner loop:}

\item[3)]
apply Lemma \ref{lem-cik-compl} in $V/W'$ to obtain
$U'$ generated by $\ell$ elements and satisfying: 
\\
\mbox{~~~~~~~~}$U'+W=V$ and $( U'+W')\cap W>W'$.

\item[4)]
If such a $U'$ cannot be found then report ``$\ell$ generators are 
insufficient for $V$'' and exit.
\\
Else set $U:=U'$.

\item[5)]
If $W\not \leq  U+W'$ then continue {\em inner} loop with $W'=(U+W')\cap W$.\\
Else continue {\em outer} loop with $W=W'$.
\end{itemize}

\smallskip\noindent
{\em Analysis of the algorithm:} At each step of the algorithm there is a 
pair $(U,W)$ of $\sset$-modules such that $V=U+W$ and $U$ is known in terms of 
$\ell$ generators. 
At every repetition of the inner loop: $W'$ becomes a
larger submodule of $W$, since at Step 5 we know (from Step 3) that $(U+W')\cap W$
is strictly larger than $W'$. 
At every repetition of the outer loop: $W$ becomes a smaller submodule 
of $V$, since at Step 5 we know (again from Step 3) that $W'$ is strictly smaller
than $W$. Thus, the number of times the algorithm can loop is bounded by $(\dim V)^2$,
which makes the algorithm polynomial time. This
gives a proof of Theorem~\ref{thm-main-2} over large base fields. 

Over small
base fields we use the algorithm of \cite{fr85} or \cite{ciw97} to compute
the radical of $\Aalg$ and the radical $V_0$ of $V$ therefrom and compute
a minimal generating set $\Gamma_0$ of the factor module $V/V_0$ using 
Proposition~\ref{pro-cik-gen} directly. For each $u_0\in \Gamma_0$
we pick a representative $u\in u_0+V_0$ and obtain a subset $\Gamma\subseteq V$
such that $|\Gamma|=|\Gamma_0|$ and $\Gamma\cup V_0$ generates $V$. 
By a standard property of the radical, we show that $\Gamma$ itself
generates $V$. Indeed, let $U$ be the submodule generated by $\Gamma$. 
If $U\neq V$ then there is a {\em maximal} (proper) submodule 
$U'\supseteq U\supseteq \Gamma$. But $U'\geq V_0$ by the definition
of $V_0$, therefore $U'\supseteq \Gamma\cup V_0$, implying $U'\supseteq V$, which
is a contradiction to $U'$ being proper. This ends the proof of Theorem~\ref{thm-main-2}.

\section{Concluding remarks}

We have shown that the maximum rank matrix
in a linear space generated by rank one matrices
and a further matrix of arbitrary rank can be found
in deterministic polynomial time if the rank one generators
are given. It would be interesting to know if there is
an efficient deterministic method in the case where
the rank one generators are not known. In this direction
we have a deterministic polynomial time
 algorithm, which, given a matrix of maximum rank constructs 
a certificate that the rank is in fact maximal (see
Theorem~\ref{thm-witness}) without knowing the rank one generators.
This implies that over sufficiently large base fields,
the maximum rank matrix can be constructed
in {\em Las Vegas} polynomial time. The best result of this flavor
is the deterministic polynomial time algorithm of Gurvits 
\cite{gu03,gu04} which decides whether there exists 
a nonsingular matrix in the space generated by rational
matrices under the assumption that the span over the
complex numbers can be generated by 
unknown rank one matrices (with not necessarily rational
entries). Unfortunately, this algorithm decides the mere existence
of a nonsingular matrix without explicitly constructing one.

The space of the maps $\mu_v:\Aalg\rightarrow V$ where
$V$ is a {\em semisimple} $\sset$-module and $\Aalg$ is the corresponding
enveloping algebra has a curious property that if $\mu_v$ is not
of maximum rank there is a $v''\in V$ such that Lemma~\ref{lem-imker}
applies for $h=\mu_v$ and $h''=\mu_{v''}$ (see Lemma~\ref{lem-cycrank}). 
In particular, over a sufficiently large field $\F$ the rank of $\mu_v+\alpha \mu_{v''}$
will be higher for some $v''$ chosen from an arbitrary
basis of $V$ and a ``generic" $\alpha\in \F$.

It would be interesting to find more classes ${\cal L}$
of spaces of linear maps with such a ``local rank incrementing"
property: There is a constant $c$ such that for every $L\in{\cal L}$,
if $h\in L$ is not of maximum rank then from an arbitrary
basis $h_1,\ldots,h_\ell$ of $L$ one can choose maps 
$h_{i_1},\ldots,h_{i_c},$ such
that $h+\alpha_1 h_{i_1}+\ldots+ \alpha_ch_{i_c}$ 
has higher rank for 
some $\alpha_1,\ldots,\alpha_c\in \F$ ($\F$ is large enough.)

\section{Acknowledgements}

We would like to thank the anonymous referees for several suggestions.
We are grateful to the Hausdorff Research Institute for Mathematics, Bonn for its hospitality
and the kind support.


\bibliographystyle{alpha}
\bibliography{refs}

\end{document}